\PassOptionsToPackage{svgnames}{xcolor}
\documentclass[11pt,reqno]{article}
\usepackage{amsmath,hyperref,amssymb, amsthm}
\usepackage{authblk}
\usepackage{graphicx,url}
\usepackage{color}
\usepackage{caption}
\usepackage[backend=biber,style=ieee]{biblatex}
\newcommand{\be}{\begin{equation}}
\newcommand{\ee}{\end{equation}}
\newcommand{\beq}{\begin{equation}}
\newcommand{\eeq}{\end{equation}}

\allowdisplaybreaks \numberwithin{equation}{section}

\newcommand{\CA}{{\cal A}}
\newcommand{\CB}{{\cal B}}

\newcommand{\CH}{\mathcal{H}}

\newcommand{\CI}{{\cal I}}
\newcommand{\CL}{{\cal L}}
\newcommand{\CN}{\mathcal{N}}

\newcommand{\calC}{\mathcal{C}}

\newcommand{\CV}{\mathcal{V}}

\DeclareMathOperator{\Tr}{Tr}

\DeclareMathOperator{\Aut}{Aut}
\DeclareMathOperator{\Com}{Com}

\newcommand{\ZZ}{\mathbb{Z}}
\newcommand{\RR}{\mathbb{R}}
\newcommand{\CC}{\mathbb{C}}
\newcommand{\QQ}{\mathbb{Q}}

\newcommand{\MM}{\mathbb{M}}

\newcommand{\paraf}{\mathcal{P}}

\newcommand{\nord}[1]{:\!\! #1\!\!:}

\newcommand{\normord}[1]{:\mathrel{ #1}:}

\newcommand{\alg}[1]{\mathfrak{#1}}

\newtheorem{theorem}{Theorem}

\DeclareMathOperator{\ch}{ch}

\DeclareMathOperator{\Hom}{Hom}
\DeclareMathOperator{\End}{End}
\DeclareMathOperator{\Rep}{Rep}

\title{Monstrous parafermionic defects and other non-invertible symmetries in chiral CFTs} 
\author{Roberto Volpato\thanks{volpato@pd.infn.it}}
\affil{\small Dipartimento di Fisica e Astronomia `Galileo Galilei', Universit\`a di Padova \\ INFN, sez. di Padova\\ Via Marzolo 8, 35131, Padova, Italy}

\begin{document}

\maketitle

\begin{abstract}
	The structure of non-invertible symmetries in 2D CFT is closely tied to the algebra of preserved holomorphic fields, its representation theory, and its embedding in the full chiral algebra of the CFT. Unfortunately, finding new embeddings of a chiral algebra into another is a difficult mathematical problem, and one of the major obstacles in the pursue of new and exotic categories of non-invertible symmetries.\\
	In this work, we describe a simple general technique that leads to a number of new non-trivial results in this direction: (1) For the holomorphic Monster CFT $V^\natural$, we show that for every `Fricke' non-anomalous Monster element of order $N$, there is an embedding of the parafermion algebra $\frac{su(2)_N}{u(1)}$ in $V^\natural$, we compute the characters of its commutant, and describe the topological defects preserving these subalgebras. (2) We provide an easy-to-check sufficient condition for a holomorphic VOA $V$ to be self-orbifold under a cyclic group of (invertible) symmetries, and determine the associated duality defect. (3) We find several new topological defects in CFTs arising from heterotic strings on $T^4$. (4) We prove a number of  VOA embeddings and topological defects in the Leech lattice CFT, in Schellekens theories, and other CFTs of various central charges, and suggest several generalizations of our methods.
\end{abstract}

\newpage
\tableofcontents

\section{Introduction}

The study of generalized symmetries in quantum field theory is one of the most important developments in theoretical physics in the last ten years (see \cite{McGreevy:2022oyu,Freed:2022iao,Shao:2023gho,Brennan:2023mmt,Bhardwaj:2023kri,Gomes:2023ahz,Schafer-Nameki:2023jdn,Luo:2023ive,Kaidi:2026urc} for recent reviews and references). In the context of two-dimensional conformal field theory (CFT), non-invertible (or categorical) symmetries are described by topological line defects, which structure has been studied for a long time \cite{Verlinde:1988sn,Petkova:2000ip,Fuchs:2002cm,Fuchs:2003id,Fuchs:2004dz,Frohlich:2004ef,Fuchs:2004xi,Frohlich:2006ch,Frohlich:2009gb,Chang_2019}. While ordinary global symmetries are formally described as elements in a group, topological defects should be interpreted as objects in a fusion (or, more generally, tensor) category. It is clearly very important to classify the possible consistent fusion categories of topological defects, and to understand whether all of them appear as generalized symmetries in some CFT.

For rational CFTs, the appearance of some fusion category of topological defect lines is directly related to subalgebras of the (anti-)chiral algebra of (anti-)holomorphic fields \cite{Rayhaun:2023pgc,Gannon:2026ttf}. Specifically, the vector space of local operators that are preserved by all topological defects in a given fusion category (i.e., such that moving any defect across the local operator does not change  any correlation function) is closed with respect to OPE on the sphere. In particular, the preserved (anti-)holomorphic fields form a subalgebra of the (anti-)chiral algebra. Vice versa, given (conformally embedded) rational subalgebras $\CB$, $\bar\CB$ of, respectively, the chiral and antichiral algebras $\CA$ and $\bar\CA$, one can consider the fusion category of topological defects preserving $\CB\otimes \bar\CB$. This connection works at the level of the corresponding abstract categories: roughly speaking, the modular tensor category of representations of the preserved (anti-)chiral subalgebra can be identified with the Drinfeld center of the fusion category of defects; vice versa, the set of primary operators in a given rational CFT and their fusion is specified by an algebra object in the modular tensor category of representations of the chiral algebra, and the fusion category of topological defects is obtained as a category of (non-local) modules for this algebra object -- we refer to \cite{Moller:2024xtt,Rayhaun:2023pgc,Gannon:2026ttf} for more details and precise definitions.

This connection provides a very effective and systematic method to determine the topological defects in a given rational CFT, once it is known which subalgebras can be conformally embedded in the chiral algebra of the theory. Unfortunately, finding the subalgebras in a given chiral algebra is a difficult mathematical problem; we know only a few systematic methods (orbifolds, embeddings of affine algebras, cosets or commutants), that have been exploited for many years. In principle, one could exploit the connection in the opposite direction: if one knows that a category of topological defects appears in a CFT, one could determine new embedded subalgebras by restricting to the preserved holomorphic fields.  In practice, we have essentially no independent way of determining new fusion categories of topological defects in a given CFT (besides the group-like categories of invertible symmetries), unless we know already something about the preserved subalgebra. The situation gets even worse in the non-rational case, where the topological defects can generate tensor categories with infinitely many simple objects; some results have been obtained only for torus models or using constraints from supersymmetry \cite{Angius:2024evd,Angius:2025zlm,Angius:2025ium}.

In this article, we describe a simple method to determine a number of new embeddings of chiral algebras into one another, and then use them to study the associated category of topological defects. Formally, we just use  known techniques in vertex operator algebras (simple current extensions, affine embeddings, orbifolds and commutants, i.e. cosets), but they are combined in a way that is particularly effective and that has not been fully explored so far.

Rather than trying to obtain the most general results, we prefer to describe a number of examples that illustrate the potentialities of our approach. We mostly consider CFTs that are purely holomorphic, so that the whole theory is given by the chiral algebra of holomorphic fields. The most famous example is Frenkel-Lepowsky-Meurman Monster module $V^\natural$ \cite{FLM1988}, that plays a crucial role in the Monster moonshine conjecture \cite{ConwayNorton1979} and in its proof by Borcherds \cite{Borcherds1992}. This CFT is a particularly nice laboratory to test our methods, because the absence of weight $1$ fields prevents the application of standard constructions based on embeddings of affine algebras of currents. The topological defects in this theory have been explored in a number of recent works \cite{Lin:2019hks,Bae:2020pvv,Fosbinder-Elkins:2024hff,Volpato:2024goy,Honda:2026bjy}. In particular, in \cite{Honda:2026bjy}, it has been conjectured that the parafermion algebras $\paraf(N)=\frac{\alg{su}(2)_N}{\alg{u}(1)}$ can be embedded in $V^\natural$ for every $N$ being an odd prime dividing the order of the Monster group $\mathbb{M}\cong \Aut(V^\natural)$. Based on this hypothesis, the authors of \cite{Honda:2026bjy} manage to build a number of new `Monster parafermionic' conformal field theories. In our work, we provide a uniform \emph{proof} that embeddings of $\paraf(N)$ in $V^\natural$ exist for a larger set of values $N$, corresponding to the orders of all `Fricke' elements of the Monster group $\mathbb{M}$ (see section \ref{s:Fricke} for the precise definitions). This result about the FLM Monster module is the consequence of more general theorems in section \ref{s:mainth}, that we later apply (see section \ref{s:examples}) to a number of CFTs -- the lattice VOA based on the Leech lattice, some other Schellekens holomorphic CFTs of central charge $24$ \cite{Schellekens1993}, a VOA of central charge $32$, and certain classes of non-holomorphic CFTs obtained from compactification of heterotic strings on $T^4$.

In section \ref{s:mainth}, we also give a systematic way to determine the $q$-expansion of the characters of the commutant (or coset) $\calC_N=\Com(\paraf(N),V)$ of  $\paraf(N)$ in $V$, i.e. the subalgebra generated by all operators that have non-singular OPE with the operators in $\paraf(N)$. The embedding of the parafermion algebras $\paraf(N)$ implies the existence of a fusion category of topological defects in the CFT; we discuss how these generalized symmetries act on the states of the theory, and how to explicitly compute the twining partition functions. Our final results concern some necessary conditions for a VOA $V$ to be self-orbifold under a (invertible) symmetry $g$ of order $N$, i.e. such that $V$ is isomorphic to the orbifold $V/\langle g\rangle$. This property is closely related to the existence of a topological `duality' defect in the VOA, that extends the $\ZZ_N$ group generated by $g$ to a Tambara-Yamagami (TY) category of generalized symmetries \cite{Tambara:1998}. We discuss such duality defects and the properties of the corresponding TY category.

The techniques we use to prove the general results in section \ref{s:mainth} and that we apply to the examples of section \ref{s:examples} admit a broad range of variations and generalizations, that can be applied in different contexts. We give a flavour of these extensions in a few examples in section \ref{s:extensions}. To make the article self-contained, in the appendices we review the definitions and some known properties of vertex operator algebras and topological defects in CFTs.

\section{Parafermion embeddings and self-orbifold properties }\label{s:mainth}

\subsection{Prelude: a string-theoretical construction}

In the next sections, we will formulate our main results in the language of vertex operator algebras, that provides a rigorous mathematical framework to describe the chiral algebra of a CFT. Nevertheless, the method adopted in the proof of these results is inspired by standard constructions in string theory, so it seems useful to provide an intuition of the main idea in this context. Similar techniques were described in \cite{Persson:2015jka,Persson:2017lkn} in the context of superstring theory, and applied in \cite{Paquette:2016xoo,Paquette:2017xui} to provide a `physics proof' of the genus zero property of Monstrous moonshine.

Let $\calC$ be a (not necessarily holomorphic) 2-dimensional CFT, that we assume to be unitary, with a unique vacuum and compact (i.e. with discrete $L_0$ and $\bar L_0$ spectrum). Let $g$ be an (invertible) symmetry of $\calC$ of finite order $N$, and suppose that it is not anomalous, i.e. the orbifold $\calC/\langle g\rangle$ is a consistent CFT. We would like to know whether the theory is self-orbifold, i.e. if there is an isomorphism $\calC \cong \calC/\langle g\rangle$, and in this case to determine the action of the corresponding duality defect.

To this goal, let us first consider the product $\calC\times S^1$ of $\calC$ times the theory of a free scalar $\phi(z,\bar z)$ compactified on a circle $S^1$ of radius $R$, and then take a `CHL like orbifold' \cite{Chaudhuri:1995fk} by a symmetry $(g,\delta)$ of order $N$, where $g$ acts on the factor $\calC$ and $\delta$ is a shift along $S^1$ by $1/N$ of a period. Notice that the twisted sectors in this orbifold carry some `fractional winding' along $S^1$; the orbifold projection forces the $g$-eigenvalues to match with the momentum quantum number (mod $N$) along $S^1$. It is easy to see that in the limit $R\to \infty$, one simply recovers the product of the original CFT $\calC$ times the uncompactified free scalar $\phi$. In particular,  in this limit the winding states become infinitely massive and disappear from the spectrum, while the orbifold projection is `washed out' as the momenta along the circle reach a continuum; see \cite{Persson:2015jka} for more details. In a similar fashion, in the limit $R\to 0$ one obtains the product of the orbifold theory $\calC/\langle g\rangle$ times the uncompactified dual free scalar $\tilde \phi$.

These two limits suggest that the CHL orbifold, in general, does not admit T-duality: the large and the small radius regimes are inequivalent theories, unless the CFT $\calC$ is self-orbifold. In fact, a more careful analysis shows that a version of T-duality (called `Fricke  T-duality' in \cite{Persson:2015jka}) exists if and only if $\calC\cong \calC/\langle g\rangle$. Thus, we translated the question about the CFT $\calC$ being self-orbifold to the existence of a certain T-duality in the CHL model.

Some basic observations about string theory on circle suggest a possible way to prove this T-duality. Suppose that at some particular radius $R=R^*$, the CHL orbifold contains some additional holomorphic weight $1$ field, that enhances the $\alg{u}(1)$ current $i\partial \phi$ to a $\alg{su}(2)$ algebra; typically, such fields will arise from the twisted sectors.  The zero modes of this $\alg{su}(2)$ algebra generate a group $SU(2)$ of symmetries of the theory at this particular radius. This group contains a symmetry that reverses the sign of the holomorphic $\alg{u}(1)$ current $i\partial \phi(z)$, while leaving its anti-holomorphic counterpart $i\bar\partial\phi(\bar z)$ fixed. As a consequence, this symmetry changes the sign of the exactly marginal operator $\nord{\partial \phi(z)\bar\partial\phi(\bar z)}$ deforming the radius of the circle, so that it establishes an equivalence between the theories at large radius $R>R^*$ and the theories at small radius $R<R^*$. This implies that T-duality holds for such a CHL orbifold, and therefore that the original CFT  $\calC$ is self-orbifold. Finally, the $SU(2)$ transformation at the self-dual radius $R^*$ provides the explicit isomorphism $\calC\cong \calC/\langle g\rangle$, from which one can construct the duality defect \cite{Volpato:2024goy}.

The reason why this approach is very effective is because one needs very little information about the CFTs $\calC$ and $\calC/\langle g\rangle$. In fact, the existence of a single twisted sector state with the  correct properties and conformal weight is sufficient to derive a result constraining  all the twisted and untwisted sectors of the theory.

The purely VOA version of this idea, that will be applied to prove the main theorems in the following sections, can be obtained by focusing on the properties of the extended chiral algebra of the CHL model at the self-dual radius (that will be $R^*=\sqrt{2N}$) and its representations \cite{Paquette:2017xui,Volpato:2024goy}. Before the orbifold,   the chiral algebra of the product $S^1\times \calC$ at $R=\sqrt{2N}$ is of the form $\alg{u}(1)_N\otimes V$. Here, $V$ is the chiral algebra (VOA) of the internal CFT $\calC$, and $\alg{u}(1)_N$ is generated by the current $i\partial\phi$ and by the holomorphic vertex operators $\nord{e^{in\sqrt{2N}\phi(z)}}$, $n\in \ZZ$. After taking the CHL orbifold, the untwisted algebra is projected to $\alg{u}(1)_N\otimes V^g$, but it can get  extended by the holomorphic fields from the twisted sectors to some larger chiral algebra $\tilde V$. As will be shown in the theorems \ref{th:simplecurrent} and \ref{th:paraf}, under some conditions that are quite easy to check, the chiral algebra $\tilde V$ contains an extension of $\alg{u}(1)_N$ to an affine current algebra $\alg{su}(2)_N$. Not only this implies the existence of T-duality (i.e. the fact that $\calC$ is self-orbifold), but also that the $g$-invariant chiral algebra $V^g$ of the internal CFT $\calC$ contains a parafermion subalgebra $\paraf(N)=\frac{\alg{su}(2)_N}{\alg{u}(1)}$. The embedding of a parafermion algebra is related to the existence of a fusion category of generalized symmetries in $\calC$ (including, but not only,  the duality defect for $\langle g\rangle\cong \ZZ_N$). 

While the VOA description is less intuitive (at least for string theorists), it can be made mathematically rigorous, and admits many more generalizations.

\subsection{The main theorems and their proofs}\label{s:proofs}

We will use the concept of vertex operator algebra (VOA) to describe the algebra generated by holomorphic fields in a CFT; see appendix \ref{a:VOA} for some definitions and references. The CFTs we consider are always unitary, compact (discrete $L_0,\bar L_0$ spectrum), and with a unique vacuum. Most of the VOAs $V$ we will consider are \emph{strongly rational}, i.e. rational (finitely many irreducible $V$-modules), simple ($V$ is irreducible, as a module over itself), self-contragredient or self-dual ($V\cong V^*$ as a $V$-module), of CFT-type ($L_0$-eigenvalues are non-negative, and only the vacuum has $L_0=0$), and $C_2$-cofinite (see \cite{Zhu1996ModularInvariance} for the definition).  For $V$ strongly rational, its category of ordinary modules is a modular tensor category \cite{Huang2008Rigidity}, so that in particular there are modular $S$- and $T$-matrices,  and Verlinde formula holds. We will also assume that the conformal weight of irreducible modules different from the vacuum are strictly positive -- this is expected in unitary CFTs.

Let $V$ be a strongly rational bosonic VOA, let $\Rep(V)$ be its modular tensor category of modules, and $Irr(V)=\{M_a\}_{a\in A}$ the finite set of irreducible modules of $V$ (i.e. the simple objects in $\Rep(V)$), with $M_0$ the vacuum representation. Recall that an irreducible module $J\in Irr(V)$ is called a simple current for $V$ if the fusion product with any $M_a\in Irr(V)$ is still irreducible
\be J\boxtimes M_a\cong M_{Ja}\in Irr(V)\ .
\ee Thus, a simple current necessarily defines a cyclic permutation $M_a\mapsto M_{Ja}$ of finite order $N$ on the set $Irr(V)$ of irreducible modules; the order $N$ of $J$ is the smallest positive integer such that $J^N\cong M_0$, where $J^N=J\boxtimes \ldots \boxtimes J$ ($N$ times).  When a simple current $J$ has integral conformal weight $h(J)\in \ZZ$, then there is an extension $\tilde V\supset V$ such that $\tilde V=\oplus_{n=1}^N J^n$ a a $V$-module.
See \cite{Carnahan:2014emx,Creutzig:2015buk} for recent results about simple current extensions. Notice that the contragredient (dual) module of $J$ is $J^{N-1}$.

\begin{theorem}\label{th:simplecurrent}
	Let $V$ be a strongly rational VOA admitting a simple current $J$ of order $N$.  Suppose that $J$ contains a vector $v\neq 0$ such that:
	\begin{enumerate}
		\item the conformal weight of $v$ is $h(v)=1-\frac{1}{N}$;
		\item either $V$ contains no states of weight $1$ ($V_1=0$), or $v$ is in the vacuum representation of the affine Kac-Moody algebra $\alg{h}$ generated by the operators in $V_1$;
	\end{enumerate}
then, there is an embedding of the $\ZZ_N$ parafermion algebra $\paraf(N)=\frac{\alg{su}(2)_N}{\alg{u}(1)_N}$ in $V$.
\end{theorem}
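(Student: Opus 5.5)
The plan is to realize the ``CHL at the self-dual radius'' picture from the prelude purely algebraically. I will tensor $V$ with the rank-one lattice VOA $U=V_{\sqrt{2N}\ZZ}$ (the $\alg{u}(1)_N$ algebra), build a simple current extension $W$ of $V\otimes U$, find an $\alg{su}(2)_N$ inside $W$ whose Cartan current is the Heisenberg field of $U$, and take commutants.

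\textbf{Step 1 (the extension).} The irreducible $U$-modules are $U_k=V_{\frac{k}{\sqrt{2N}}+\sqrt{2N}\ZZ}$, $k\in\ZZ_{2N}$. They are simple currents with $h(U_k)=\frac{k^2}{4N}$ for $|k|\le N$, and $U_2$ has order $N$ with $h(U_2)=\frac1N$. The module $J$ contains $v$ with $h(v)=1-\frac1N$, so $h(J)\equiv 1-\frac1N \pmod 1$ (all weights of an irreducible module agree mod $1$). Hence $\mathcal J=J\otimes U_2$ is a simple current of $V\otimes U$ of order $N$ with integral weight. I will check that the weights of $J^n\otimes U_{2n}$ are all integral via the standard quadratic-form property of simple current weights. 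Then the simple current extension result quoted before the theorem gives a strongly rational VOA $W=\bigoplus_{n=0}^{N-1}J^n\otimes U_{2n}$, which is unitary by assumption on $V$.

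\textbf{Step 2 (finding $\alg{su}(2)_N$).} Let $p=\frac{2}{\sqrt{2N}}$, and set $E=v\otimes e^{ip\phi}\in J\otimes U_2$, which has weight $1-\frac1N+\frac{p^2}{2}=1$. Let $F$ be the weight-$1$ vector in the contragredient summand $J^{N-1}\otimes U_{-2}$ paired with $E$ by the invariant bilinear form. Let $h=\sqrt{2N}\,i\partial\phi$. Then $h_{(0)}E=\sqrt{2N}\,p\,E=2E$ and $h_{(0)}F=-2F$, and $h_{(1)}h=2N\,\mathbf 1$. The key point is $[E,F]=E_{(0)}F$. It lies in the momentum-zero part of $W_1$, which is $V_1\oplus\CC h$.

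Here hypothesis 2 enters. If $V_1=0$ there is nothing to check. Otherwise, I read hypothesis 2 as saying that $v$ spans a trivial (vacuum-type) $\hat{\alg h}$-representation, so $x_{(0)}E=0$ for $x\in V_1$. Invariance of the form then gives $\langle x,[E,F]\rangle=\langle [x,E],F\rangle=0$, so $[E,F]\in\CC h$. It is also nonzero, since $\langle h,[E,F]\rangle=\langle[h,E],F\rangle=2\langle E,F\rangle\neq0$. After rescaling $F$, the triple $E,h,F$ spans $\alg{sl}_2$, and the level is read off from $h_{(1)}h=2N$, giving level $N$. So $W$ contains a quotient of the affine VOA $V^N(\alg{sl}_2)$. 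Because $W$ is unitary (equivalently, because the generated module is integrable in a $C_2$-cofinite setting), this quotient is the simple $L_{\alg{sl}_2}(N,0)=\alg{su}(2)_N$.

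\textbf{Step 3 (commutant).} The parafermion algebra is $\paraf(N)=\Com(\CC[h],\alg{su}(2)_N)$, which sits inside $\Com(\CC[h],W)$. A vector of $J^n\otimes U_{2n}$ is annihilated by the positive Heisenberg modes only if its momentum vanishes. Momentum zero occurs in the coset $\frac{2n}{\sqrt{2N}}+\sqrt{2N}\ZZ$ only when $n\equiv0 \pmod N$, and there the Heisenberg commutant of $U$ is $\CC\mathbf 1$. Hence $\Com(\CC[h],W)=V\otimes\CC\mathbf 1\cong V$, and therefore $\paraf(N)\hookrightarrow V$.

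I expect the main obstacle to be Step 2. It needs a precise reading of hypothesis 2 that really forces $[E,F]$ to have no $V_1$-component. One must also make sure that $E,F$ close on an $\alg{sl}_2$ rather than a larger algebra, for example through extra weight-$1$ vectors in other $J^n\otimes U_{2n}$. Finally, the generated affine algebra has to be identified with the simple quotient at level $N$. If the ``trivial $\hat{\alg h}$-action'' reading turns out to be too strong, I would instead project $[E,F]$ orthogonally onto $\CC h$ and argue that the $V_1$-part commutes with $E,F,h$, so that it decouples.
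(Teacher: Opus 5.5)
Your proposal is correct and follows essentially the same route as the paper. The paper also forms the simple current extension $\bigoplus_n J^n\otimes U_{2n}$, builds $\alg{su}(2)_N$ from $v\otimes e^{ip\phi}$, its dual partner and the rescaled Heisenberg current, uses $\Com(\alg{u}(1),W)\cong V$, and only phrases the final step through the conformal embedding $\paraf(N)\otimes\calC_N\subset V$. Your invariant-form argument for $[E,F]\in\CC h$ and your level computation just fill in the paper's ``simple calculation''. There is one slip in Step 3: the relevant condition is annihilation by the \emph{nonnegative} Heisenberg modes, since the zero mode is what forces zero momentum; vectors like $v\otimes e^{ip\phi}$ are already killed by all positive modes.
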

\begin{proof}
	Consider the $c=1$ lattice VOA $W_0:=V_{\sqrt{2N}\ZZ}\cong \alg{u}(1)_N$ of one chiral free boson on the lattice $\sqrt{2N}\ZZ$. It contains a $\alg{u}(1)$ current $j^0(z)$ that we normalize so that $j^0(z)j^0(0)=\frac{1}{z^2}+\ldots$ as well as $\alg{u}(1)$ primary states $\CV_\lambda$ of $u(1)$-charge $\lambda\in \sqrt{2N}\ZZ$. The VOA $W_0$ has $2N$ irreducible modules $W_m$, $m\in \ZZ/2N\ZZ$, containing states $\CV_\lambda$ of $u(1)$-charge $\lambda\in \frac{m}{\sqrt{2N}}+\sqrt{2N}\ZZ$. The modules $W_m$ form a single orbit with respect to the simple current $W_1$ of order $2N$. 
	We consider the product VOA
	\be W_0\otimes V\ ,
	\ee and extend it by the simple current $W_2\otimes J$, that has order $N$ and integral conformal weight $\frac{2^2}{4N}+1-\frac{1}{N}=1$, to get the VOA
	\be \tilde V=\bigoplus_{n\in \ZZ/N\ZZ} W_{2n}\otimes J^n\ .
	\ee The original VOA $V$ can be identified with the commutant  in $\tilde V$ of the subVOA $\alg{u}(1)\subset W_0$ generated by the $\alg{ u}(1)$ current $j^0(z)$
	\be \Com(\alg{u}(1),{\tilde V})\cong V\ .
	\ee Indeed, such a commutant must have zero $\alg{u}(1)$-charge, which implies that it must be contained in the $W_0\otimes V$ component of $\tilde V$, and it must commute with the stress-tensor $T_{W_0}$ of the $W_0$ factor, which is built as a normal ordered product of the $\alg{u}(1)$ current.\\
	Let us consider the weight $1$ fields in $\tilde V$. Besides the algebra $\alg{u}(1)\oplus \alg{h}$ generated by the weight $1$ operators in $W_0\otimes V$, there are (at least) two more weight one fields $j^+:=\CV_{\frac{+2}{\sqrt{2N}}}\otimes v \in W_2\otimes J$, and $j^-:=\CV_{\frac{-2}{\sqrt{2N}}}\otimes v' \in W_{-2}\otimes J^{N-1}$, where $v'\in J^{N-1}$ is a vector such that $(v,v')=1$ with respect to the natural pairing between $J$ and its contragredient module $J^{N-1}$. Notice that $v'$ can be chosen to be in the vacuum representation of $\alg{h}$ (the dual of the $\alg{h}$-representation where $v$ lives). The OPE of $j^+(z)$ and $j^-(z)$ must be contained in the $W_0\otimes V$ component of $\tilde V$ and the OPE of either $j^+$ or $j^-$ with any current in $V$ must be non-singular, so that $j^+(z)$, $j^-(z)$ and $j^0(z)$ form a closed current algebra. A simple calculation shows that it is a $\alg{su}(2)_N$ affine algebra, where $\sqrt{N}j^0(z)$ is a Cartan generator normalized so that the roots have squared length $2$. Therefore, if we denote by
	\be \mathcal{C}_N:=\Com(\alg{su}(2)_N,\tilde V)\ ,
	\ee the commutant of $\alg{su}(2)_N$ within $\tilde V$, we have a conformal embedding
	\be  \alg{su}(2)_N\otimes\mathcal{C}_N\subset \tilde V\ .
	\ee The algebra $\alg{su}(2)_N$ contains the $\alg {u}(1)$ generated by $j^0(z)$, and the commutant of $\alg{u}(1)$ in $\alg{su}(2)_N\otimes \mathcal{C}_N$ is 
	\be \Com(\alg{u}(1),\alg{su}(2)_N\otimes\mathcal{C}_N)= \frac{\alg{su}(2)_N}{\alg{u}(1)}\otimes \mathcal{C}_N\cong \paraf(N)\otimes \mathcal{C}_N\ .
	\ee On the other hand, this must be contained in the commutant of $\alg{u}(1)$ in the larger VOA $\tilde V$, so that we have a conformal embedding
	\be \paraf(N)\otimes \mathcal{C}_N\subset \Com(\alg{u}(1),{\tilde V})\cong V\ ,
	\ee and we conclude.
\end{proof}

A particularly nice example of simple currents is given by the $g$-twisted sector $V_g$ for a (strongly rational) holomorphic VOA $V$ with an automorphism $g\in \Aut(V)$ of finite order $N$.  Theorem \ref{th:paraf} below shows that very general results can be obtained simply by requiring the conformal weight of the $g$-twisted sector to be strictly less than $1$, and to the $g$-twisted ground state to commute with the $g$-fixed currents in $(V^g)_1$. 

Let us first review some basic facts about orbifolds in VOA. For $V$ strongly rational holomorphic and $g\in \Aut(V)$ of order $N$, it is known \cite{Carnahan:2016guf,vanEkeren:2017scl} that the $g$-invariant subVOA $V^g \subset V$ is also strongly rational and has $N^2$ irreducible ordinary modules $V_{a,b}$, $a,b\in \ZZ/N\ZZ$. Recall that a holomorphic VOA $V$ with cyclic symmetry $\langle g\rangle$  admits a unique (up to isomorphisms) irreducible $g^a$-twisted $V$-module $V_{g^a}$. In particular, the conformal weight of the $g$-twisted sector, for $g$ of order $N$, takes values in
\be \frac{t}{N^2}+\frac{1}{N}\ZZ\ ,
\ee where $t\in \ZZ/N\ZZ$ is the order of the 't Hooft anomaly, i.e. of the $3$-cohomology class $[\alpha]\in H^3(\ZZ_N,U(1))\cong \ZZ_N$ that determines a non-trivial associator for the corresponding fusion category of invertible defects. The orbifold $V/\langle g\rangle$ is a consistent VOA if and only if the anomaly vanishes, i.e. $t=0$.

From now on, let us assume that $g$ is non-anomalous. The action of $g$ on $V$ can be extended to a symmetry of the same order $N$ on each $g^a$-twisted sector; this extension is determined up to an arbitrary $N$-th root of unity. Each ${g^a}$-twisted module $V_{g^a}$ decomposes into ordinary modules $V_{a,b}$ for the $g$-invariant subVOA $V^g$
\be V_{g^a}=\bigoplus_{b\in \ZZ/N\ZZ} V_{a,b}\ .
\ee
In particular, $V_{a,b}$ is the $g=e^{2\pi i\frac{b}{N}}$ eigenspace in $V_{g^a}$:
\be V_{a,b}:=\{ v\in V_{g^a}\mid g(v)=e^{2\pi i \frac{b}{N}}v\}\ ,\qquad a,b\in \ZZ/N\ZZ\ .
\ee
The ambiguity in the choice of the $g$-action on the twisted sectors can be fixed by requiring
the conformal weights of $V_{a,b}$ to be
\beq\label{weights} h_{V_{a,b}}=\frac{ab}{N}\mod \ZZ\ ,
\eeq
and to satisfy the group-like fusion rules
\beq\label{fusion} V_{i,j}\boxtimes V_{k,l}\cong  V_{i+k,j+l}\ , \qquad i,j,k,l\in \ZZ/N\ZZ\ .
\eeq
In this convention, the action of $g$ on the $g^1$-twisted sector $V_g$ is defined by
\be g_{\rvert V_{g}}=e^{2\pi i L_0}\ ,
\ee while on a generic $g^a$-twisted sector it is defined in such a way that \eqref{fusion} hold.
The orbifold theory 
\be
V/\langle g\rangle:=\bigoplus_{a\in \ZZ/N\ZZ} V_{a,0}\ ,
\ee is a holomorphic VOA with the same central charge.

\begin{theorem}\label{th:paraf}
	Let $V$ be a strongly rational holomorphic VOA, and $g\in \Aut(V)$ be a non-anomalous automorphism of finite order $N$. Let $(V^g)_1$ the space of currents (vectors of conformal weight $1$) of  the $g$-fixed subVOA $V^g\subset V$. If the $g$-twisted sector $V_g$ contains a non-zero vector $v\in V_g$ such that:
	\begin{enumerate}
		\item the conformal weight $h(v)$ satisfies
		$$ 0<h(v)<1
		$$ and
		\item either $(V^g)_1=0$, or $v$ is in the vacuum representation of the current algebra $\alg{h}$ generated by the weight $1$ operators in $(V^g)_1$,
	\end{enumerate} then:
	\begin{enumerate}
		\item The conformal weight of $v$ is necessarily $h(v)=1-1/N$;
		\item $V$ is self-orbifold with respect to $g$, i.e. the orbifold VOA $V/\langle g\rangle$ is isomorphic to $V$;
		\item There is an embedding of the $\ZZ_N$ parafermion algebra $\paraf(N)$ in $V^g$. 
	\end{enumerate} 
\end{theorem}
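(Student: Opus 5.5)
The plan is to reduce to Theorem~\ref{th:simplecurrent}, applied to the strongly rational VOA $V^g$ with simple current $J:=V_{1,b}$, where $v\in V_{1,b}$ for some $b$. We may take $v$ to be a $g$-eigenvector: $g$ acts on $V_g$ by $e^{2\pi i L_0}$, so each $L_0$-homogeneous vector already lies in a single $V_{1,b}$. The fusion rules \eqref{fusion} show that every $V_{a,b}$ is a simple current of $V^g$. The order of $V_{1,b}$ is exactly $N$, since $V_{1,b}^{\boxtimes k}\cong V_{k,kb}$ is trivial only when $k\equiv 0 \bmod N$.

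The first step is to pin down $h(v)$. By \eqref{weights}, $h(v)\equiv b/N \bmod \ZZ$, and $0<h(v)<1$ gives $h(v)=b/N$ with $1\le b\le N-1$. To force $b=N-1$, I would run the construction in the proof of Theorem~\ref{th:simplecurrent} with a modified lattice. Tensor with $W_0=V_{\sqrt{2N}\ZZ}$ and take the simple current $W_{2m}\otimes V_{1,b}$, choosing $m$ so that $\frac{(2m)^2}{4N}+\frac bN$ is an integer and the current is compatible with the structure of $V^g$. The key observation is that a vector $\CV_\lambda\otimes v$ with $\lambda=2m/\sqrt{2N}$ has weight $\frac{\lambda^2}{2}+h(v)$, and this equals $1$ only for a very restricted $\lambda$. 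Unitarity gives a bound in the spirit of the Cauchy--Schwarz/$\alg{su}(2)$-type argument: a $\alg{u}(1)$ primary $j$ of weight $1$ and charge $\lambda$ generates, together with $j^0$ and its conjugate, a copy of $\alg{su}(2)$ at level $2/\lambda^2$. This level must be a positive integer, which forces $\lambda^2=2/k$ and $h(v)=1-1/k$.

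To relate $k$ to $N$, I would use the fact that $V_{1,b}$ generates all twisted sectors $V_{a,ab}$ for $a\in\ZZ/N\ZZ$. The resulting $\alg{su}(2)_k$ must contain the $\alg{u}(1)$ whose charge lattice matches $\sqrt{2N}\ZZ$ modulo the simple-current orbit of length $N$. Combined with the requirement that $V^g$ be the commutant, this should give $k=N$, hence $b=N-1$. The second hypothesis of the theorem ensures that the $\alg{h}$-currents do not spoil the $\alg{su}(2)$ closure, exactly as in Theorem~\ref{th:simplecurrent}. I expect this weight identification to be the main obstacle: the naive argument may only yield $k$ dividing or dividing into $N$, and a careful comparison of charge lattices will be needed.

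Once $h(v)=1-1/N$, Theorem~\ref{th:simplecurrent} applies to $V^g$ with $J=V_{1,N-1}$. This gives part 3, and also the VOA $\tilde V=\bigoplus_n W_{2n}\otimes V_{n,-n}$, which contains an affine $\alg{su}(2)_N$. For part 2, I would use the $SU(2)$ Weyl reflection $\exp(\pi i (j^+_0+j^-_0)/\sqrt{\cdot})$ acting on $\tilde V$. It sends $j^0\mapsto -j^0$, so it maps $W_{2n}\otimes V_{n,-n}$ to $W_{-2n}\otimes V_{n,-n}$. Viewed through $\Com(\alg u(1),\cdot)$-graded pieces, this is an automorphism of $\tilde V$ fixing $V^g\supset\paraf(N)\otimes\calC_N$ only up to a twist by the parafermion duality.

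More concretely, I would compare the two decompositions of $\tilde V$ into $W_m\otimes(\text{$V^g$-modules})$ and into $W_m\otimes(\text{$V^g$-modules})$ conjugated by the reflection. The reflection induces a braid-reversed auto-equivalence of $\Rep(V^g)$ that exchanges $V_{a,0}$ and $V_{0,a}$, since it exchanges the roles of the twisted label $a$ and the charge label $b$. This yields an isomorphism $V=\bigoplus_b V_{0,b}\cong\bigoplus_a V_{a,0}=V/\langle g\rangle$ of VOAs. Checking that this exchange is precisely $(a,b)\mapsto(b,a)$, possibly with a sign, is the second delicate point, and I would verify it by tracking $\alg{u}(1)$ charges of the $\alg{su}(2)_N$ representations appearing in $\tilde V$.
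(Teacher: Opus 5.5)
You correctly reduce part 3 to Theorem~\ref{th:simplecurrent} once $h(v)=1-\frac1N$ is known, as the paper does. However, the two substantive steps are left open, and the mechanisms you sketch for them do not work as stated.

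\emph{Part 1.} With $W_0=V_{\sqrt{2N}\ZZ}$, a vector $\CV_\lambda\otimes v$ with $\lambda=\frac{2m'}{\sqrt{2N}}$ has weight $\frac{m'^2+b}{N}$. This equals $1$ only if $r:=N-b$ is a perfect square. The paper instead uses $W_0=V_{\sqrt{2rN}\ZZ}$ and the simple current $W_{2r}\otimes V_{1,-r}$, whose ground state has weight $\frac rN+1-\frac rN=1$.

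Even with that lattice, integrality of the level only gives $k=N/r\in\ZZ$, i.e.\ $r\mid N$. That settles prime $N$, but not, e.g., $N=4$ with $h(v)=\frac12$. Moreover, the $j^0$-charges of $\tilde V$ itself form $\sqrt{2r/N}\,\ZZ$, exactly the charges of integer-spin $\alg{su}(2)_{N/r}$-representations. So comparing charge lattices inside $\tilde V$ yields no contradiction.

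The missing idea is to look at a \emph{non-vacuum} $\tilde V$-module. The local modules are $\tilde V_t=\bigoplus_n W_{2nr-t}\otimes V_{n,t-nr}$, and for $t=1$ the $j^0$-charges lie in $\frac{1}{\sqrt{2rN}}(-1+2r\ZZ)$. The Weyl reflection of $\alg{su}(2)$ is inner, so it preserves every $\tilde V$-module while flipping $j^0$. Hence this charge set must be symmetric; equivalently, the Cartan eigenvalues $2n-\frac1r+2N\ZZ$ must be integers. This forces $r=1$.

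A fix staying inside $\tilde V$ also exists. For $1<r\mid N$, the nonzero vector $(j^+_{-1})^k\mathbf 1$ has charge $\sqrt{2k}$ and weight exactly $k=\frac{(\sqrt{2k})^2}{2}$, and it lies in $W_{2N}\otimes V_{k,0}$. That forces a weight-$0$ vector in the nontrivial module $V_{k,0}$, which is impossible. Either way, some such argument must actually be supplied; your proposal does not contain one.

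\emph{Part 2.} The reflection sends $W_{2n}\otimes V_{n,-n}$ to $W_{-2n}\otimes V_{-n,n}$, not to $W_{-2n}\otimes V_{n,-n}$, which is not even a summand of $\tilde V$. More importantly, $\tilde V$ contains only the diagonal modules $V_{n,-n}$. So no tracking of charges inside $\tilde V$ can tell you where $V_{0,b}$ goes. On the diagonal, the swap $(a,b)\mapsto(b,a)$ you need is indistinguishable from charge conjugation $(a,b)\mapsto(-a,-b)$.

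The paper again uses the modules. With $r=1$, $\tilde V_t=\bigoplus_{m\equiv t \bmod 2}W_m\otimes V_{\frac{m+t}{2},\frac{t-m}{2}}$. The inner reflection acts as $f_W\otimes f_{V^g}$ with $f_W\colon W_m\to W_{-m}$, and it preserves each $\tilde V_t$. This forces $f_{V^g}(V_{a,b})\cong V_{b,a}$ for all $a,b$; taking $t=m=b$ gives $V_{b,0}\leftrightarrow V_{0,b}$. Therefore $f_{V^g}$ extends to an isomorphism $V\cong V/\langle g\rangle$.

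Finally, the induced autoequivalence of $\Rep(V^g)$ is braided, not braid-reversed. Pulling back along a VOA automorphism preserves conformal weights, consistent with $h(V_{a,b})=h(V_{b,a})=\frac{ab}{N}$.
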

\begin{proof}
Because $g$ is not anomalous, the conformal weight of the $g$-twisted sector is in $\frac{1}{N}\ZZ$, and the condition $0<h(v)<1$ implies that $h(v)=1-\frac{r}{N}$ for some integral $r$ with $0<r<N$. Using the decomposition $V_g=\oplus_{b\in \ZZ/N\ZZ} V_{1,b}$ into $g$-eigenspaces, and the fact that $V_{1,b}$ has conformal weights in $\frac{b}{N}+\ZZ$, we conclude that $v$ must be contained in the component $V_{1,-r}$. Furthermore, a state $v'$ in the vacuum representation of the current algebra $\alg{h}$ and with the same conformal weight must be contained in the dual (contragredient) module $V_{-1,r}\subset V_{g^{-1}}$. Let us show that $r$ must be $1$. We  use an analogous construction as in the proof ot theorem \ref{th:simplecurrent}, but now based on the free boson VOA $W_0:=V_{\sqrt{2rN}\ZZ}$ based on the lattice $\sqrt{2rN}\ZZ$. We take the product $W_0\otimes V^g$, and denote by
\be V_{m,a,b}:=W_m\otimes V_{a,b}\ ,\qquad m\in \ZZ/2rN\ZZ,\quad a,b\in \ZZ/N\ZZ\ ,
\ee the modules of this product algebra. Then, we can extend $V_{0,0,0}\cong W_0\otimes V^g$ by the simple current $V_{2r,1,-r}=W_{2r}\otimes V_{1,-r}$ of order $N$, that has a ground state $\CV_{\frac{+2r}{\sqrt{2rN}}}\otimes v$ of conformal weight $1-\frac{r}{N}+\frac{(2r)^2}{4rN}=1$, to get the VOA
\be \tilde V=\oplus_{n\in \ZZ/N\ZZ} V_{2rn, n,-nr}\ .
\ee The modules of $\tilde V$ decompose into modules $V_{m,a,b}$ of the subVOA $V_{0,0,0}$ that are local with respect to the simple current $V_{2r,1,-r}$, i.e. such that the difference of conformal weights $h(V_{m+2r,a+1,b-r})-h(V_{m,a,b})$ is integral. The latter condition translates into
\begin{align} \frac{(m+2r)^2}{4Nr}+\frac{(a+1)(b-r)}{N}- \frac{m^2}{4Nr}-\frac{ab}{N}\in \ZZ \quad &\Leftrightarrow \quad \frac{4r(m+r)}{4Nr}+\frac{b-r-ar}{N}\in \ZZ  \\ &\Leftrightarrow \quad  m-ar+b\equiv 0\mod N\ZZ\ .
\end{align} The modules of $\tilde V$ are therefore given by
\be \tilde V_t=\bigoplus_{n\in \ZZ/N\ZZ} V_{2nr-t,n,t-nj}\qquad t\in \ZZ/2N\ZZ\ .
\ee By the same argument as in the proof of theorem \ref{th:simplecurrent}, the VOA $\tilde V$ contains an affine subalgebra $\alg{su}(2)$ at some level. Therefore, the group of inner automorphisms of $\tilde V$ contains a $SO(3)$ subgroup generated by the zero modes of these $\alg{su}(2)$ currents. In particular, there is an involution $h$ whose adjoint action on the three $\alg{su}(2)$ currents  maps $j^0(z)$ to $-j^0(z)$. Being an inner automorphism, it must map each module $\tilde V_t$ into itself, and it must flip the sign of the $j^0$-charge. This implies that in each module $\tilde V_t$, the set of $j^0$-charges must be symmetric with respect to change of sign. But if we take, for example, the module
\be \tilde V_1=\bigoplus_{n\in \ZZ/N\ZZ} V_{2nr-1,n,1-nr}\ ,
\ee we see that the $j^0$-charges take values in $\frac{1}{\sqrt{2rN}} (-1+2r\ZZ)$. This set is symmetric with respect to $0$ if and only if $r=1$, so this is the only consistent value of $r$, and point (1) of the theorem is proved.\\
Let us now set $r=1$, and prove that the theory is self-orbifold, i.e $V\cong V/\langle g\rangle$. The inner automorphism $h$ maps $j^0$ to $-j^0$, so it must map the $0$-charge component $V_{0,0,0}=W_0\otimes V^g$ of $\tilde V$ to itself, and it must be given by the tensor product $h=f_{W}\otimes f_{V^g}$ of the charge conjugation automorphism $f_W$ in $W_0$ and an involution $f_{V^g}$ of  $\Com(\alg{u}(1),\tilde V)\cong V^g$. As mentioned above, because it is inner it must map each $\tilde V$-module 
\be \tilde V_t=\bigoplus_{n\in \ZZ/N\ZZ} V_{2n-t,n,t-n}=\bigoplus_{\substack{m\in \ZZ/2N\ZZ\\m\equiv t\mod 2}} V_{m,\frac{m+t}{2},\frac{t-m}{2}}=\bigoplus_{\substack{m\in \ZZ/2N\ZZ\\m\equiv t\mod 2}} W_m\otimes V_{\frac{m+t}{2},\frac{t-m}{2}}
\ee  into itself, and map each component $ V_{m,\frac{m+t}{2},\frac{t-m}{2}}$ to the component $V_{-m,\frac{-m+t}{2},\frac{t+m}{2}}$ with opposite $j^0$ charge. As a consequence, the involution $f_{V^g}$ must map each $V^g$-module $V_{a,b}$ to the module $V_{b,a}$, for all $a,b\in \ZZ/N\ZZ$. Because $V$ and $V/\langle g\rangle$ are given by
\be V=\bigoplus_{n\in \ZZ/N\ZZ} V_{0,n}\ ,\qquad V/\langle g\rangle=\bigoplus_{n\in \ZZ/N\ZZ} V_{n,0}
\ee it follows that the involution $f_{V^g}$ extends to a VOA isomorphism $f: V\to V/\langle g\rangle$, thus proving point (2).
\\
Finally, to prove point (3) we just notice that, because $r=1$, then the hypotheses of theorem \ref{th:simplecurrent} are all satisfied, so that there is an embedding of the parafermion algebra $\paraf(N)$ in $V^g$.
\end{proof}

In the proof of this theorem, and in particular the statement that the conformal weight is necessarily $1-1/N$, it is crucial that the $g$-twisted state $v$ is in the vacuum representation of the $g$-fixed affine algebra. If we drop this hypothesis, there are certainly many examples where the conformal weight is $1-r/N$ with $r>1$. We will discuss some examples in the following sections.


\subsection{Characters and representations of the commutant algebra}\label{s:commutant}

Let us discuss the properties of the commutant $\mathcal{C}_N$ of $\alg{su}(2)_N$ in the VOA $\tilde V$ appearing in the proofs of theorems \ref{th:simplecurrent} or \ref{th:paraf}; equivalently, $\mathcal{C}_N$ is the commutant of $\paraf(N)$ in $V$ or $V^g$, respectively. First, by the standard coset construction, the difference
\be T_\mathcal{C}:=T_{\tilde V}-T_{\mathfrak{su}(2)_N}\ ,
\ee between the stress-energy tensor $T_{\tilde V}$ of $\tilde V$ and the Sugawara stress-tensor $T_{\mathfrak{su}(2)_N}$ of $\alg{su}(2)_N$ commutes with all currents in $\mathfrak{su}(2)_N$, and therefore provides a well-defined stress tensor for the commutant $\mathcal{C}$, with central charge $c_\mathcal{C}= c_{\tilde V}-\frac{3N}{N+2}$.

There are two subtle questions about $\calC_N$ that we are not able to answer in general in this work: (1) whether $\calC_N$ is rational, and (2) whether $\paraf(N)$ and $\calC_N$ form a \emph{dual pair} of commuting subalgebras, namely if it is also true that the commutant of $\calC_N$ in $V$ or $V^g$ is exactly $\paraf(N)$, and not a non-trivial extension of it. As discussed below, the second question can be answered in particular cases, using some of the tools described in this section.

In the following, we focus on the case of theorem \ref{th:paraf}, where $V$ is  a strongly rational holomorphic VOA with an automorphism $g$ of finite order $N$, and the $g$-twisted sector contains a state of conformal weight $1-1/N$ in the trivial representation of the $g$-invariant current algebra $(V^g)_1$. In the rest of this section, we show how to explicitly compute the characters of the subVOA $\mathcal{C}_N=\Com(\paraf(N),V^g)$ commuting with the embedded parafermion $\paraf(N)=\frac{\alg{su}(2)_N}{\alg{u}(1)_N}$, and of (some of) its modules.

Recall that the irreducible modules $\paraf(N,[l,m])$ of $\paraf(N)$ are labeled by an $\alg{su}(2)_N$ index $l\in\{0,1,\ldots,N\}$ and a $\alg{u}(1)_N$ index $m\in \ZZ/2N\ZZ$, subject to the condition
\be l-m\in 2\ZZ\ ,
\ee and to the field identification
\be \paraf(N,[l,m])\cong \paraf(N,[N-l,m\pm N])
\ .
\ee In the case where $\paraf(N)$ and $\calC_N$ are a dual pair of strongly rational commuting subalgebras in the holomorphic VOA $V$, we expect the modular tensor category $\mathrm{Rep}(\calC_N)$ of $\calC_N$ modules to be the opposite of the category of $\mathrm{Rep}(\paraf(N))$, i.e. to have the same objects and tensor structure, just with the braiding and twisting inverted. Thus, we expect irreducible $\calC_N$-modules $\calC(N,{[l,m]})$ to carry the same labels as the $\paraf(N)$-modules, with the same fusion rules, and with $T$ and $S$-matrices being the complex conjugate of the ones of $\paraf(N)$. This means that the modular data associated with the product $\paraf(N)\otimes \calC_N$ is the same as the product $\paraf(N)\otimes \overline{\paraf(N)}$ of a chiral and antichiral copy of the parafermion algebra in a non-holomorphic CFT, and the classification of modular invariants is exactly the same. Even without proving rationality of $\calC_N$, one can expect the VOA $V$ to decompose into a finite number of modules $\paraf(N,[l,m])\otimes \calC(N,{[l,m]})$ for $\paraf(N)\otimes \calC_N$. Furthermore, the category of $\calC_N$-modules given by finite direct sums of the $\calC(N,{[l,m]})$ appearing in this decomposition (that might not contain all possible ordinary modules for $\calC_N$), should be a tensor category with some `nice' properties; see for example \cite{McRae2024MirrorEquivalence} for recent results in this sense. Our goal in this section is to provide an effective algorithm to obtain the $q$-expansions of the characters $f_{[l,m]}$ of the $\calC_N$-modules $\calC(N,{[l,m]})$.

We assume that the twisted-twining partition functions for $g$ are known
\be T_{g^a,g^k}(\tau):=\Tr_{V_{g^a}}(g^k\,q^{L_0-\frac{c}{24}})\ .
\ee Then, the characters $\tilde T_{a,b}$ of the $V^g$-modules $V_{a,b}$ are given by
\be \tilde T_{a,b}(\tau)=\frac{1}{N}\sum_{k\in \ZZ/N\ZZ} e^{-2\pi i\frac{kb}{N}}T_{g^a,g^k}(\tau)\ .
\ee
In terms of these functions, we can easily derive the characters of the $\tilde V$-modules $\tilde V_t$
\be\label{tildeVchars} \ch_{\tilde V_t}(\tau,z):=\Tr_{\tilde V_t}(q^{L_0-\frac{c+1}{24}}y^{j_0})=\sum_{n\in \ZZ/N\ZZ} \frac{\Theta^{(N)}_{2n-t}(\tau,z)}{\eta(\tau)}\tilde T_{n,t-n}(\tau)\ ,\qquad y=e^{2\pi i z}\ ,
\ee 
where we kept track of the $\alg{u}_1$ charge using the `flavoured' $\alg{u}(1)_N$ theta series
\be \Theta_m^{(k)}(\tau,z)=\sum_{n\in \ZZ} q^{\frac{1}{4k}(m+2kn)^2}y^{\frac{m+2kn}{2}}\ .\ee
Because the current algebra $\alg{su}(2)_N$ is embedded in $\tilde V$,  each $\ch_{\tilde V_t}(\tau,z)$ must admit a decomposition into characters of $\alg{su}(2)_N$ and of its commutant $\mathcal{C}_N$
\be\label{su2exp} \ch_{\tilde V_t}(\tau,z)=\sum_{\substack{l=0\\ l\equiv t\bmod{2}}}^N\ch^{\alg{su}(2)_N}_l(\tau,z)f_{[l,t]}(\tau)\ .
\ee The restriction on the sum over $l$ is due to the fact that  each $\alg{su}(2)_N$ representation contains either only odd or only even $j^0$-charges, depending on the parity of $l$. One can invert the identity \eqref{su2exp} and use \eqref{tildeVchars} to unambiguously determine all the functions $f_{[l,t]}(\tau)$ up to arbitrary order in $q$. 
Explicitly, using formulas \eqref{su2Nchar} and \eqref{invertsu2N} for the $\ch^{\alg{su}(2)_N}_l$ characters, one obtains
\begin{align} f_{[l,t]}(\tau)&=q^{-\frac{(l+1)^2}{4N+8}}\left[\sum_{n\in \ZZ/N\ZZ} \frac{\Theta^{(N)}_{2n-t}(\tau,z)}{\eta(\tau)}(\Theta^{(2)}_1(\tau,z)-\Theta^{(2)}_{-1}(\tau,z))\tilde T_{n,t-n}(\tau)\right]_{y^{l+1}}\\
	&=\sum_{n\in \ZZ} q^{\frac{N(N+2)}{8}(\frac{2n-t}{N}-\frac{l+1}{N+2})^2} (-1)^{\frac{l+t}{2}-n}\frac{\tilde T_{n,t-n}(\tau)}{\eta(\tau)}	\ ,
\end{align} where $\left[\ldots\right]_{y^{l+1}}$ denotes the coefficient of the $y^{l+1}$ term.

 Notice that the $\tilde V$-modules $\tilde V_t$ and $\tilde V_{t+N}$ are built in terms of the same $V^g$-representations $V_{n,t-n}$, tensored with different $\alg{u}(1)_N$ modules. The net effect is that the character of $\tilde V_{t+N}$ is obtained from the one of $\tilde V_t$ simply by replacing each $\alg{su}(2)_N$ character $\ch^{\alg{su}(2)_N}_l(\tau,z)$ with $\ch^{\alg{su}(2)_N}_{N-l}(\tau,z)$, i.e.
\be\label{su2exp2} \ch_{\tilde V_{t+N}}(\tau,z)=\sum_{\substack{l=0\\ l\equiv t\bmod{2}}}^N\ch^{\alg{su}(2)_N}_{N-l}(\tau,z)f_{[l,t]}(\tau)\ .
\ee  By comparing \eqref{su2exp} and \eqref{su2exp2}, we get
\be f_{[l,t]}(\tau)=f_{[N-l,t+N]}(\tau)\ .
\ee These identities strongly suggest that the $\calC_N$-modules labeled by $[l,t]$ and $[N-l,t+N]$ are isomorphic, in agreement with the suggestion that the category of representation of $\calC_N$ is the opposite as the one of $\paraf(N)$-modules.

By expanding the $\alg{su}(2)_N$ characters in \eqref{su2exp} in terms of characters of the subalgebra $\paraf(N)\otimes \alg{u}(1)_N\subset \alg{su}(2)_N$, we obtain
\be\label{tildeVchars2} \ch_{\tilde V_t}(\tau,z)=\sum_{\substack{l=0\\l\equiv t\bmod{2}}}^N\sum_{\substack{m\in \ZZ/2N\ZZ\\ m\equiv l\bmod{2}}}\frac{\Theta^{(N)}_{m}(\tau,z)}{\eta(\tau)}\chi^{\paraf(N)}_{[l,m]}(\tau)f_{[l,t]}(\tau)\ .
\ee 
By consistency, the $n$-th term in \eqref{tildeVchars} must match with the $m$-th term in \eqref{tildeVchars2}, where $m=2n-t$, and therefore
\be \tilde T_{n,t-n}(\tau)=\sum_{\substack{l=0\\l\equiv t\bmod{2}}}^N \chi^{\paraf(N)}_{[l,2n-t]}(\tau)f_{[l,t]}(\tau)\ ,
\ee or equivalently
\be \tilde T_{a,b}(\tau)=\sum_{\substack{l=0\\l\equiv a+b\bmod{2}}}^N \chi^{\paraf(N)}_{[l,a-b]}(\tau)f_{[l,a+b]}(\tau)\ .
\ee In particular, the partition function $Z_V(\tau)$ of the original VOA $V$ can be expanded as
\be\label{diaginvar} Z_V(\tau)=\sum_{b\in \ZZ/N\ZZ} \tilde T_{0,b}(\tau)=\sum_{b\in \ZZ/N\ZZ} \sum_{\substack{l=0\\l\equiv b\bmod{2}}}^N \chi^{\paraf(N)}_{[l,-b]}(\tau)f_{[l,b]}(\tau)\ ,
\ee which corresponds to the following decomposition of $V$ into $\paraf(N)\otimes \calC_N$ modules
\be V=\bigoplus_{b\in \ZZ/N\ZZ} \bigoplus_{\substack{l=0\\l\equiv b\bmod{2}}}^N \paraf(N,[l,-b])\otimes \calC_N([l,b])\ .
\ee
The sum is over all distinct $\paraf(N)$-modules, each one with multiplicity $1$, tensored with the corresponding module for $\calC_N$. Thus, the VOA $V$ is the holomorphic analogue of the diagonal modular invariant for $\paraf(N)$. 

When the characters $f_{[l,t]}$ are explicitly known, one can determine whether the commutant $\Com(\calC_N,V)$ of $\calC_N$ in $V$ is $\paraf(N)$, i.e. if $\paraf(N)$ and $\calC_N$ are a dual pair, by checking whether the modules $\calC_N([l,b])$ different from the vacuum module $\calC_N([0,0])\cong \calC_N$ have strictly positive conformal weight. 

One can also obtain the partition function of the orbifold VOA $V/\langle g\rangle$ by
\be Z_{V/\langle g\rangle}(\tau)=\sum_{a\in \ZZ/N\ZZ} \tilde T_{a,0}(\tau)=\sum_{a\in \ZZ/N\ZZ} \sum_{\substack{l=0\\l\equiv a\bmod{2}}}^N \chi^{\paraf(N)}_{[l,a]}(\tau)f_{[l,a]}(\tau)
\ee which corresponds to the charge conjugate modular invariant. The fact that $V$ and $V/\langle g\rangle$ correspond to different modular invariants implies that every isomorphism $V\stackrel{\cong}{\longrightarrow} V/\langle g\rangle$ is not the identity when restricted to the common subVOA $\paraf(N)\otimes \calC_N$, but acts by some charge conjugation outer automorphism either on $\paraf(N)$ or on $\calC_N$.

In the next sections, we will provide some examples of these characters.


\subsection{Topological defects preserving the parafermion and its commutant}\label{s:topdefparaf}

The decomposition \eqref{diaginvar} suggests that the theory $V$ admits a fusion category of topological defects isomorphic to the category $\mathrm{Rep}(\paraf(N))$ of modules for the parafermion algebra $\paraf(N)$ (see for example \cite{Haghighat:2023sax}). The definition of such defects is completely analogous to the definition of Verlinde lines in a diagonal modular invariant of a rational VOA, in this case $\paraf(N)$. 

Explicitly, this means that there is one simple defect $\CL_{[l,t]}$ for each simple module $\paraf(N,[l,t])$ in $\mathrm{Rep}(\paraf(N))$. Because $\CL_{[l,t]}$  preserves the subVOA $\paraf(N)\otimes \calC_N$, this defect is completely determined by describing the action of the corresponding linear operator $\hat\CL_{[l,t]}$ on all primary states with respect to this algebra.  By \eqref{diaginvar}, the VOA $V$ contains exactly one $(\paraf(N)\otimes \calC_N)$-primary state $\Phi_{[l',t']}$ for each irreducible $\paraf(N)$-module $\paraf(N,[l',t'])$. All such primary states must be simultaneous eigenstates for all simple defect operators $\hat\CL_{[l,t]}$, and the eigenvalues are constrained by the Cardy condition \cite{Petkova:2000ip,Chang_2019}. The solution to these constraints is essentially unique, namely
\be \hat \CL_{[l,t]}(\Phi_{[l',t']}) =\lambda_{[l,t],[l',t']}\Phi_{[l',t']},\qquad \lambda_{[l,t],[l',t']}:= \frac{S_{[l,t],[l',t']}}{S_{[0,0],[l',t']}}\ ,
\ee where \be S_{[l,t],[l',t']}=\frac{2}{\sqrt{2(N+2)}}e^{\pi i \frac{tt'}{N}}\sin\left(\pi \frac{(l+1)(l'+1)}{N+2}\right) \ee
is the S-matrix of the parafermion algebra $\paraf(N)$. 

The $\CL_{[l,t]}$-twining partition functions can be easily computed using \eqref{diaginvar} as
\be Z^{[l,t]}(\tau):=\Tr_V(\hat\CL_{[l,t]}q^{L_0-\frac{c}{24}})=\sum_{b\in \ZZ/N\ZZ} \sum_{\substack{\ell=0\\\ell\equiv b\bmod{2}}}^N \lambda_{[l,t],[\ell ,b]}\chi^{\paraf(N)}_{[\ell ,b]}(\tau)f_{[\ell,-b]}(\tau)\ .
\ee Similarly, the partition function of the $\CL_{[l,t]}$-twisted sector is given by
\be Z_{[l,t]}(\tau):=\Tr_{V_{\CL_{[l,t]}}}(q^{L_0-\frac{c}{24}})=\sum_{t',t''\in \ZZ/N\ZZ} \sum_{\substack{l'=0\\l'\equiv t'\bmod{2}}}^N \sum_{\substack{l''=0\\l''\equiv t''\bmod{2}}}^N N_{[l,t],[l',t']}^{[l'',t'']}\chi^{\paraf(N)}_{[l',t']}(\tau)f_{[l'',-t'']}(\tau)\ ,
\ee where
\be N_{[l,t],[l',t']}^{[l'',t'']}=\sum_{b\in \ZZ/N\ZZ}\sum_{\substack{\ell=0\\\ell\equiv b\bmod{2}}}^N\frac{S_{[l,t],[\ell,b]}S_{[l',t'],[\ell,b]}S^*_{[l'',t''],[\ell,b]}}{S_{[0,0],[\ell,b]}}
\ee are the fusion coefficients of the $\mathrm{Rep}(\paraf(N))$ category.

When $N$ is odd, the category $\mathrm{Rep}(\paraf(N))$ 
contains  the subcategory $ \mathrm{Rep}(\alg{so}(3)_N)$ of representations of $\alg{so}(3)_N$, generated by $\CL_{[l,0]}$, $l\in \{0,\ldots, N\}$, with $l$ even. Furthermore, for all $N$, $\mathrm{Rep}(\paraf(N))$ contains a $\ZZ_N$ group category generated by $\CL_{[0,2t]}$. This is of course just the symmetry group $\langle g\rangle$, so that we usually just write
\be \CL_{[0,2t]}\equiv \CL_{g^t}\ ,\qquad t\in \ZZ/N\ZZ\ .
\ee

The category $\mathrm{Rep}(\paraf(N))$ does not contain the duality defect related to the fact that the theory $V$ is self-orbifold, $V\cong V/\langle g\rangle$. The reason is that the duality defect does not preserve $\paraf(N)\otimes \calC_N$, but acts by charge conjugation on one of the two factors. It is therefore useful to consider an extended fusion category $\widetilde{\mathrm{Rep}}(\paraf(N))$ that only preserves the subalgebra \be \paraf(N)^C\otimes\calC_N\subset \paraf(N)\otimes\calC_N\ ,\ee
where $\paraf(N)^C$ is the subVOA of $\paraf(N)$ fixed by the charge conjugation. 

In general, a duality defect $\CN$ for the group $\langle g\rangle\cong \ZZ_N$ satisfies
\be \CN \CL_{g^k}=\CL_{g^k}\CN=\CL_{g^k}\ ,\qquad \CN^*=\CN\ ,\qquad \CN^2=\sum_{k\in \ZZ/N\ZZ}\CL_{g^k}\ .
\ee As a consequence of these fusion rules, any duality defect $\CN$ annihilates all primary states that are not $\ZZ_N$ invariant, i.e. the ones with $b\neq 0$ in eq.\eqref{diaginvar}. On the $g$-invariant subalgebra, the associated operator $\hat\CN$  acts by $\sqrt{N}$ times an outer automorphism of order $2$ on $V^g$ that exchanges the modules $V_{0,b}$ with the modules $V_{a,0}$. Vice versa, any such outer automorphism of $V^g$ defines a duality defect $\CN$.

The extended fusion category $\widetilde{\mathrm{Rep}}(\paraf(N))$ contains one such duality defect $\CN$, whose corresponding automorphism on $V^g$ is induced by an inner $SO(3)$ symmetry of the extended VOA $\tilde V$, as in the proof of theorem \ref{th:paraf}. By construction, such a defect preserves the subVOA $\calC_N=\Com(\alg{su}(2)_N,\tilde V)$, while it acts on $\paraf(N)$ by a charge conjugation. 

A technique to compute the twining character of $\CN$ was provided in \cite{Volpato:2024goy}; here,  we summarize and refine this idea. Recall from the proof of theorem \ref{th:paraf} that the product $W_0\otimes V_{0,0}=\alg{u}(1)_N\otimes V_{0,0}$ can be extended to a VOA $\tilde V$ containing a $\alg{su}(2)_N$ algebra, with the current zero modes generating a $SO(3)$ Lie group of inner automorphisms of $\tilde V$. Consider the Weyl reflection of this $\alg{su}(2)_N$ changing the sign of the Cartan $\alg{u}(1)$ subalgebra, and let $f:\tilde V\to \tilde V$ be its lift to the $SO(3)$ group of inner automorphisms.  Notice that, while the lift of the non-trivial Weyl element of the $su(2)$ Lie algebra to the simply connected Lie group $SU(2)$ has order $4$, the group acting faithfully on $\tilde V$ is  $SO(3)\cong SU(2)/\ZZ_2$, so $f$ is an involution. Furthermore, $f$ restricts to an automorphism of the subVOA $W_0\otimes V_{0,0}$ of the form $f_1\otimes f_2$, where $f_1$ is the standard charge conjugation automorphism of $W_0\cong \alg{u}(1)$, while $f_2:V_{0,0}\to V_{0,0}$ is the outer automorphism of $V^g=V_{0,0}$ associated with the duality defect $\CN$. Thus, the twining partition function for $\CN$ is given  by
\be Z^{\CN}(\tau):=\Tr_V(\hat\CN q^{L_0-\frac{c}{24}})=\sqrt{N}\Tr_{V_{0,0}}(f_2q^{L_0-\frac{c}{24}})\ .
\ee Let us consider the $f$-twining partition function for the VOA $\tilde V$
\be \Tr_{\tilde V}(fq^{L_0-\frac{c+1}{24}})=\Tr_{W_0\otimes V_{0,0}}(f_1\otimes f_2q^{L_0-\frac{c+1}{24}})=\Tr_{W_0}(f_1q^{L_0-\frac{1}{24}})\Tr_{ V_{0,0}}(f_2q^{L_0-\frac{c}{24}})\ ,
\ee where the first identity follows because only the states with zero $\alg{u}(1)$ charge contribute to the trace of $f$, and all such states are contained in the subVOA $W_0\otimes V_{0,0}$. Thus, we get
\be  Z^{\CN}(\tau)=\sqrt{N}\frac{\Tr_{\tilde V}(fq^{L_0-\frac{c+1}{24}})}{\Tr_{W_0}(f_1q^{L_0-\frac{1}{24}})}\ .
\ee The computation of the denominator in this equation is standard: only the states with zero $\alg{u}(1)$ charge contribute, and we obtain
\be \Tr_{W_0}(f_1q^{L_0-\frac{1}{24}})=\frac{1}{q^{\frac{1}{24}}\prod_{n=1}^\infty(1+q^n)}=\frac{\eta(\tau)}{\eta(2\tau)}\ .
\ee To compute the numerator, we use the fact that $f$ is conjugate to a symmetry of order $4$ in the Cartan torus of $SU(2)$ (and order $2$ on the $SO(3)$ representations), so that
\be \Tr_{\tilde V}(fq^{L_0-\frac{c+1}{24}})=\sum_{n\in \ZZ/N\ZZ} \frac{\Theta^{(N)}_{2n}(\tau,\frac{1}{4})}{\eta(\tau)}\tilde T_{n,-n}(\tau)=\sum_{\substack{l=0\\ l\text{ even}}}^N\ch^{\alg{su}(2)_N}_{l}(\tau,\tfrac{1}{4})f_{[l,0]}(\tau)\ .
\ee We obtain
\be\label{dualitytwin}  Z^{\CN}(\tau)=\sqrt{N}\sum_{\substack{l=0\\l\text{ even}}}^N\ch^{\alg{su}(2)_N}_{l}(\tau,\tfrac{1}{4})\frac{\eta(2\tau)}{\eta(\tau)}f_{[l,0]}(\tau)\ .
\ee
This formula suggests that the twining partition function of $\CL_{[l,t]}\CN$ is given by
\be  Z^{\CN\CL_{[l,t]}}(\tau)=\sqrt{N}\sum_{\substack{\ell=0\\\ell \text{ even}}}^N\lambda_{[l,t],[\ell ,0]}\ch^{\alg{su}(2)_N}_{\ell}(\tau,\tfrac{1}{4})\frac{\eta(2\tau)}{\eta(\tau)}f_{[\ell,0]}(\tau)\ .
\ee The duality defect $\CN$, together with the invertible $\ZZ_N$ defects $\CL_{g^k}$, generate a Tambara-Yamagami category $TY(\ZZ_N,\chi,\epsilon)$ \cite{Tambara:1998}. For a given abelian group $A\cong \ZZ_N$, such a category is determined by a symmetric non-degenerate bicharacter $\chi: A\times A\to U(1)$, and by a sign $\epsilon\in \{\pm1\} $. The bicharacter $\chi$ determines the precise mapping between $g^a$-twisted sectors $V_{a,0}$ and $g$-eigenspaces $V_{0,b}$, determined by the outer automorphism in $V_{0,0}$. In particular, a TY category for $\langle g\rangle\cong\ZZ_N$  with bicharacter $\chi$, the outer automorphism exchanges the $g^a$-twisted sector $V_{a,0}$ to the $\ZZ_N$-representation where $g^k$ acts by $\chi(g^a,g^k)\in U(1)$. The defect $\CN$ we are interested in maps the module $V_{a,0}$ to $V_{0,a}$, so that the relevant bicharacter is
\be \chi(g^a,g^k)=e^{2\pi i\frac{ak}{N}}\ .
\ee

\section{Examples}\label{s:examples}

\subsection{Fricke elements in the Monster VOA}\label{s:Fricke}

In this section, let $V\equiv V^\natural$ be the FLM Monster holomorphic VOA with $c=24$  \cite{FLM1988}, i.e. the only known (and, conjecturally, the unique) holomorphic VOA with $c=24$ and no currents, $V^\natural_1=0$. Its automorphism group is isomorphic to the Monster group $\Aut(V^\natural)\cong \mathbb{M}$, the largest sporadic finite simple group.  Let $g\in \Aut(V^\natural)$ be a Fricke non-anomalous element of order $N$ \cite{ConwayNorton1979}. This means that the corresponding McKay-Thompson series
\be T_g(\tau)\equiv T_{g^0,g^1}(\tau):=\Tr_{V^\natural}(gq^{L_0-1})
\ee is invariant under the  Fricke involution
\be T_g(-\frac{1}{N\tau})=T_g(\tau)\ .
\ee This implies that the partition function of the $g$-twisted sector is
\be T_{g,1}(\tau)=q^{-1/N}+0+O(q^{1/N})\ .
\ee In particular, there is a single $g$-twisted ground state with conformal weight $1-\frac{1}{N}$. Thus, theorem \ref{th:paraf} implies that for every such Fricke element $g$ of order $N$, there is a parafermion algebra $\paraf(N)$ embedded in the $g$-fixed subVOA $V^g$, and $V$ is self-orbifold $V\cong V/\langle g\rangle$. There are $80$ such conjugacy classes, up to algebraic conjugation, namely
\begin{multline}
1A,\ 2A,\ 3A,\ 4A,\ 5A,\ 6A,\ 6B,\ 6C,\ 6D,\ 7A,\ 8A,\ 9A,\ 10A,\ 10B,\ 10C,\ 10D,\ 11A,\\ 
12A,\ 12B,\ 12E,\ 12H,\ 13A,\ 14A,\ 14B,\ 14C,\ 15A,\ 15B,\ 15C,\ 16C,\ 17A,\ 18A,\ 18B,\ 18C,\ 18E,\\ 19A,\ 20A,\ 20C,\ 20F,\ 21A,\ 21B,\ 21D,\ 22A,\ 22B,\ 23AB,\ 24B,\ 24C,\ 24I,\ 25A,\ 26A,\ 26B,\ 27AB,\\ 28B,\ 28C,\ 29A,\ 30A,\ 30B,\ 30C,\ 30D,\ 30F,\ 30G,\ 31AB,\ 32A,\ 33A,\ 33B,\ 34A,\ 35A,\ 35B,\\ 36A,\ 36B,\ 36D,\ 38A,\ 39A,\ 39CD,\ 41A,\ 42A,\ 42B,\ 42D,\ 44AB,\ 45A,\ 46AB,\ 46CD,\ 47AB,\\ 50A,\ 51A,\ 52A,\ 54A,\ 55A,\ 56A,\ 59AB,\ 60B,\ 60C,\ 60D,\ 62AB,\ 66A,\ 66B,\ 69AB,\\ 70A,\ 70B,\ 71AB,\ 78A,\ 78BC,\ 87AB,\ 92AB,\ 94AB,\ 95AB,\ 105A,\ 110A,\ 119AB\ .
\end{multline}
For each such (non-trivial)class $X$ with $X\in\{2A,3A,\ldots\}$, we denote by
\be \calC_X\equiv \calC_X^\natural:=\Com(\paraf(N),V^g)
\ee the commutant in $V^g$ of the parafermion algebra $\paraf(N)$ associated with an element $g$ in class $X$.
The $g$-fixed subVOA $V^g\subset V$ carries a representation of the group $G_g:=C_{\MM}(g)/\langle g\rangle$, where 
\be C_{\MM}(g)=\{h\in \MM\mid hg=gh\}
\ee is the centralizer of $g$ in the Monster group $\MM$. The group $G_g$ acts projectively on the $V^g$-modules $V_{a,b}$; more precisely, the group acting linearly is the central extension $\langle g,Q\rangle.G_g$ of $G_g$ by $\langle g,Q\rangle\cong \ZZ_N\times \ZZ_N$ where $g$ and the quantum symmetry $Q$ act on $V_{a,b}$ by multiplication by $e^{2\pi i\frac{b}{N}}$ and $e^{2\pi i\frac{a}{N}}$, respectively. Our construction implies that the group $G_g$ acts trivially on the parafermion algebra $\paraf(N)\subset V^g$. It follows that the group of automorphisms of the commutant $\calC_X$ contains $G_g$. 

The presence of a parafermion algebra was already known for some Monster conjugacy classes of small order $N$, such as $N=2$, $N=3$, and $N=5$ \cite{HohnLamYamauchi,HoehnLamYamauchi2012McKayE6,LamYamadaYamauchi}. For class $2A$ and $3A$, the commutant subalgebra have been studied in good detail; in particular, $\calC_{2A}$ is the Baby Monster subVOA of $V^\natural$ \cite{HohnLamYamauchi}, while $\calC_{3A}$ is a VOA with automorphism group (containing) the Fischer group $F_{3+}=Fi'_{24}$ \cite{HoehnLamYamauchi2012McKayE6}.

In general, the $g$-fixed subVOA will contain the parafermion algebras related to all pairs $g^{\pm k}$ of elements of $\langle g\rangle$ that are Fricke, in particular for all $\pm k$ that are coprime to $N$. For example, for $g$ in class $71AB$, the $g$-fixed subalgebra contains $35$ different parafermion algebras $\paraf(71)$ of central charge $c=\frac{2N-2}{N+2}=\frac{140}{73}$, corresponding to all possible pairs $g^{\pm k}$ for $k=1,\ldots, 35$. These algebras $\paraf(N)$ in general do not commute with each other, with some exception that will be described in sections \ref{s:5A} and \ref{s:pairparaf}. It would be interesting to study the VOA generated by such algebras, and their commutant. 

\subsection{An example: class 5A} \label{s:5A}

Let us focus on the $5A$ case, that has been recently considered in \cite{Bae:2020pvv,Honda:2026bjy}. The McKay-Thompson series for $g$ in this class is \cite{ConwayNorton1979}
\be T_g(\tau)=\frac{\eta(\tau)^6}{\eta(5\tau)^6}+6+125\frac{\eta(5\tau)^6}{\eta(\tau)^6}=\frac{1}{q}+134 q+760 q^2+3345 q^3+12256 q^4+39350 q^5+114096 q^6
+\ldots
\ee Using the procedure described in the previous section, we obtain the $q$-expansions of the characters $f_{[l,t]}$ of the modules $\calC_{5A}([l,t])$ for the algebra $\calC_{5A}=\Com(\paraf(5),V^g)$:
\begin{align*}
f_{[0,0]}&=f_{[5,5]}=q^{-\frac{20}{21}}\left(1+27228 q^2+2447296 q^3+86616936 q^4+1834812656 q^5
+\ldots\right)\\
f_{[2,0]}&=f_{[3,5]}=q^{\frac{16}{21}}\left(12122+1725712 q+74865624 q^2+1793410208 q^3+29502447295 q^4
+\ldots\right)\\
f_{[4,0]}&=f_{[1,5]}=q^{\frac{4}{21}}\left(133+101840 q+7660876 q^2+247745408 q^3+4961433786 q^4
+\ldots\right)\\
f_{[1,\pm 1]}&=f_{[4,\mp 4]}=q^{\frac{104}{105}}\left(36005+3519256 q+129371722 q^2+2805498048 q^3+43106152834 q^4
q^5
+\ldots\right)\\
f_{[3,\pm 1]}&=f_{[2,\mp 4]}=q^{\frac{59}{105}}\left(3344+727738 q+37321832 q^2+983838676 q^3+17258872128 q^4
+\ldots\right)\\
f_{[5,\pm 1]}&=f_{[0,\mp 4]}=q^{-\frac{16}{105}}\left(1+8912 q+1096738 q^2+44579664 q^3+1027651312 q^4+16467862536 q^5
+\ldots\right)\\
f_{[0,\pm 2]}&=f_{[5,\mp 3]}=q^{\frac{26}{105}}\left(134+74880 q+5277993 q^2+164952568 q^3+3232862752 q^4
+\ldots\right)\\
f_{[2,\pm 2]}&=f_{[3,\mp 3]}=q^{\frac{101}{105}}\left(38456+3915900 q+146786912 q^2+3221277802 q^3+49904924808 q^4
+\ldots\right)\\
f_{[4,\pm 2]}&=f_{[1,\mp 3]}=q^{\frac{41}{105}}\left(760+267710 q+16150304 q^2+465675085 q^3+8664020368 q^4
+\ldots\right)
\end{align*}
The centralizer $C_\MM(g)$ of an element $g$ in class $5A$ is isomorphic to $\ZZ_5\times HN$, where $HN$ is the Harada-Norton simple group \cite{Atlas}. Therefore $G_g=C_\MM(g)/\langle g\rangle\cong HN$ acts by automorphisms on $\calC_{5A}$.

As discussed in \cite{LamYamadaYamauchi} and \cite{Bae:2020pvv}, the VOA $V^g$ for $g$ in class 5A actually contains two commuting copies of the $\paraf(5)$ parafermion algebra. Indeed, one can see from the character above that the $\calC_{5A}$-modules $\calC_{5A}([0,\pm 4])$ contain a single ground state of weight $1-1/5$. Furthermore, the module $\calC_{5A}([0,\pm 4])$ is a simple current of order $5$ for the VOA $\calC_{5A}$. Therefore, we can apply theorem \ref{th:simplecurrent} to conclude that the VOA $\calC_{5A}$ contains a second $\ZZ_5$ parafermion algebra, that we denote as $\paraf'_5$. In fact, it is easy to see that this second $\ZZ_5$ parafermion is associated with the ground states of weight $1-1/5$ in the $V^g$-modules $V_{2,2}$ and $V_{-2,-2}$. The decomposition of the $V^g$-modules characters into characters of $\paraf(5)\times \calC_{5A}$ in this case gives
\be \tilde T_{\pm 2,\pm 2}(\tau)=\chi^{\paraf(5)}_{[0,0]}(\tau)f_{[0,\pm 4]}(\tau)+\chi^{\paraf(5)}_{[2,0]}(\tau)f_{[2,\pm 4]}(\tau)+\chi^{\paraf(5)}_{[4,0]}(\tau)f_{[4,\pm 4]}(\tau)
\ee and one can observe that the ground state of $V_{2,2}$ is in the component $\paraf(5,[0,0])\otimes \calC_{5A}([0,\pm 4])$. Because it is in the vacuum representation of $\paraf(5)$, the new parafermion algebra $\paraf(5)'$ built out of it commutes with $\paraf(5)$.

One can therefore define a smaller subVOA in $\calC_{5A}$ given by the commutant of $\paraf(5)'\subset \calC_{5A}$, i.e
\be \calC_{5,5}:=\Com(\paraf(5)',\calC_{5A})=\Com(\paraf(5)\otimes \paraf(5)',V^g)\ .
\ee
One can compute the characters of this new subVOA $\calC_{5,5}$ using exactly the same method that we used to determine the $\calC_{5A}$-characters $f_{[l,t]}$. With respect to the order $5$ permutation given by the simple current $\calC_{5A}([0,+4])$, the $15$ irreducible $\calC_{5A}$-modules organize into $3$ orbits, that can be labeled by $l\in \{0,2,4\}$. For each such orbit, there is a module of the extended $\tilde V$ algebra in theorem \ref{th:paraf}, which character can be written in different ways as
\be \sum_{t\in \ZZ/5\ZZ} \frac{\Theta^{(5)}_{2t}(\tau,z)}{\eta(\tau)} f_{[l,4t]}(\tau)=\sum_{l'\in\{0,2,4\}}\ch^{\alg{su}(2)_5}_{l'}(\tau,z)F_{l',l}(\tau)=\sum_{l'\in\{0,2,4\}}\sum_m \frac{\Theta^{(5)}_m(\tau,z)}{\eta(\tau)} \chi^{\paraf(5)}_{[l',m]}(\tau)F_{l',l}(\tau)\ ,
\ee from which one can obtain the characters $F_{l',l}\equiv F_{5-l',5-l}$, $l,l'\in \{0,2,4\}$ of $\calC_{5,5}$. We also get the decomposition
\be f_{[l,4t]}(\tau)=\sum_{l'\in\{0,2,4\}}\chi^{\paraf(5)}_{[l',2t]}(\tau)F_{l',l}(\tau)\ ,\qquad l\in\{0,2,4\},\ t\in \ZZ/5\ZZ\ ,
\ee and
\be \tilde T_{a,b}=\sum_{\substack{l,l'=0\\ l,l'\equiv a+b\bmod 2}}^5\chi^{\paraf(5)}_{[l,a-b]}(\tau)\chi^{\paraf(5)}_{[l',-2(a+b)]}(\tau)F_{l',l}(\tau)\ .
\ee The partition function of $V^\natural$ decomposes as
\be Z_{V^\natural}(\tau)=\sum_{b\in \ZZ/5\ZZ} \tilde T_{0,b}(\tau)=\sum_{b\in \ZZ/5\ZZ} \sum_{\substack{l,l'=0\\ l,l'\equiv b\bmod 2}}^5\chi^{\paraf(5)}_{[l,-b]}(\tau)\chi^{\paraf(5)}_{[l',-2b]}(\tau)F_{l',l}(\tau)\ .
\ee
We report the first terms in the $q$-expansions of the $\calC_{5,5}$-characters:
\begin{align*}
F_{0,0}(\tau)&=F_{5,5}(\tau)=q^{-\frac{19}{21}}(1+18316 q^2+1360096 q^3+42393826 q^4+811613728 q^5
+\ldots)\\
F_{2,0}(\tau)&=F_{3,5}(\tau)=q^{\frac{17}{21}}(8778+1003408 q+37866696 q^2+815035704 q^3+12259464259 q^4
+\ldots)\\
F_{4,0}(\tau)&=F_{1,5}(\tau)=q^{\frac{5}{21}}(133+65968 q+4172476 q^2+119360584 q^3+2166248140 q^4
+\ldots)\\
F_{0,2}(\tau)&=F_{5,3}(\tau)=q^{\frac{17}{21}}(8778+1003408 q+37866696 q^2+815035704 q^3+12259464259 q^4
+\ldots)\\
F_{2,2}(\tau)&=F_{3,3}(\tau)=q^{\frac{11}{21}}(3344+680504 q+32364068 q^2+795272512 q^3+13076464336 q^4
+\ldots)\\
F_{4,2}(\tau)&=F_{1,3}(\tau)=q^{\frac{20}{21}}(35112+3184818 q+108781232 q^2+2204347320 q^3+31813496792 q^4
+\ldots)\\
F_{0,4}(\tau)&=F_{5,1}(\tau)=q^{\frac{5}{21}}(133+65968 q+4172476 q^2+119360584 q^3+2166248140 q^4
+\ldots)\\
F_{2,4}(\tau)&=F_{3,1}(\tau)=q^{\frac{20}{21}}(35112+3184818 q+108781232 q^2+2204347320 q^3+31813496792 q^4
+\ldots)\\
F_{4,4}(\tau)&=F_{1,1}(\tau)=q^{\frac{20}{21}}(760+231705 q+12595936 q^2+333082540 q^3+5746222592 q^4
+\ldots)
\end{align*}
The $q$-expansions of these functions coincide with the characters of the VOA $VHN^\natural$ given in section 3.2.5 of \cite{Bae:2020pvv}, as expected since $VHN^\natural$ can be obtained as the commutator of the product $\paraf(5)\otimes \paraf(5)'$ of the two parafermion algebras. This is a independent consistency check of our calculations, given that the functions in \cite{Bae:2020pvv}were obtained using different methods that make use of the modularity properties.

Notice also that the normalizer
\be N_\MM(g):=\{h\in\MM\mid hgh^{-1}=g^k\text{ for some }k\}
\ee contains an involution that exchanges $g$ and $g^2$. Such an involution must exchange the two parafermion algebras $\paraf(5)$ and $\paraf(5)'$, so it must map $F_{l,l'}$ to $F_{l',l}$. Thus, it acts on $\calC_{5,5}$ by an outer automorphism.

\subsection{Pairs of commuting parafermions in the Monster VOA} \label{s:pairparaf}

The construction of two commuting parafermions $\paraf(N)\otimes \paraf(N)'$ we just described for the class $5A$ can be generalized to several conjugacy classes of Fricke non-anomalous symmetries, namely
\be
5A,\ 10D,\ 13A,\ 17A,\ 26B,\ 29A,\  41A\ .
\ee
For these numbers $N$, there exist $r\in \ZZ/N\ZZ$ such that $r^2=-1\mod N$, so that the ground state of the $V^g$-module $V_{r,r}$ has conformal weight $1-1/N$. Furthermore, these classes have the property that $1-1/N$ is the lowest possible conformal weight for a $V^g$-module $V_{a,b}$ different from the vacuum. Let us specialize the three different formulae \eqref{tildeVchars}, \eqref{su2exp}, and\eqref{tildeVchars2}  for the character of the $\tilde V$-module $\tilde V_t$ to the case $t=2r$:
\be \sum_{n\in \ZZ/N\ZZ} \frac{\Theta^{(N)}_{2n-2r}(\tau,z)}{\eta(\tau)}\tilde T_{n,2r-n}(\tau)=\sum_{\substack{l=0\\ l\text{ even}}}^N\ch^{\alg{su}(2)_N}_l(\tau,z)f_{[l,2r]}(\tau)=\sum_{\substack{l=0\\l \text{ even}}}^N\sum_{\substack{m\in \ZZ/2N\ZZ\\ m\text{ even}}}\frac{\Theta^{(N)}_{m}(\tau,z)}{\eta(\tau)}\chi^{\paraf(N)}_{[l,m]}(\tau)f_{[l,2r]}(\tau)\ .
\ee From the first expression, it is clear that the lowest power in the $q$-expansion of this character is a single state with zero $\alg{u}(1)$-charge from the $\frac{\Theta^{(N)}_{0}(\tau,z)}{\eta(\tau)}\tilde T_{r,r}(\tau)$ term in the sum -- all the other states have strictly larger conformal weight.  In the second expression, this term must come from the $l=0$ contribution $\ch^{\alg{su}(2)_N}_0(\tau,z)f_{[0,2r]}(\tau)$, since all the other $\alg{su}(2)_N$ characters have degenerate ground states with non-zero charge. As a consequence, this $q$-power must come from the $\chi^{\paraf(N)}_{[0,0]}(\tau)f_{[0,2r]}(\tau)$ term in the third sum, and in particular from the vacuum of the parafermion algebra $\paraf(N)$. 
 It follows that the $\calC_X$-module $\calC_X([0,2r])$ has a ground state of conformal weight $1-1/N$. Furthermore, $\calC_X([0,2r])$ is a simple current of order $N$, because
\be \calC_X([0,2r])\boxtimes \calC_X([l,t])\cong \calC_X([l,t+2r])\ ,\qquad l\in \{0,\ldots,N\},\ t\in \ZZ/2N\ZZ
\ee and the property $r^2=-1\mod N$ implies that $2r$ has order $N$ mod $2N\ZZ$.
Therefore, one can apply theorem \ref{th:paraf} to conclude that $\calC_X$ contains a second $\ZZ_N$ parafermion algebra $\paraf(N)'$. We denote by
\be \calC_{N,N}:=\Com(\paraf(N)',\calC_X)=\Com(\paraf(N)\otimes \paraf(N)',V^g)
\ee the commutant of $\paraf(N)'$ in $\calC_X$. By the same argument as the $\ZZ_5$ case above, the characters $F_{l',l}$ of the $\calC_{N,N}$-modules can be labeled by a pair of indices $l,l'\in \{0,\ldots,N\}$ with $l\equiv l'\mod 2$, and with the field identification $F_{l,l'}=F_{N-l,N-l'}$. The functions $F_{l,l'}$ can be determined in terms of $f_{[l,t]}$ using
\be \sum_{t\in \ZZ/N\ZZ} \frac{\Theta^{(N)}_{2t}(\tau,z)}{\eta(\tau)} f_{[l,2rt]}(\tau)=\sum_{\substack{l'=0\\l' \text{ even}}}^N\ch^{\alg{su}(2)_N}_{l'}(\tau,z)F_{l',l}(\tau)=\sum_{\substack{l'=0\\l' \text{ even}}}^N\sum_m \frac{\Theta^{(N)}_m(\tau,z)}{\eta(\tau)} \chi^{\paraf(N)'}_{[l',m]}(\tau)F_{l',l}(\tau)\ ,
\ee from which one can also obtain  the decompositions
\be f_{[l,2rt]}(\tau)=\sum_{\substack{l'=0\\l' \text{ even}}}^N\chi^{\paraf(N)'}_{[l',2t]}(\tau)F_{l',l}(\tau)\ ,\qquad l\in\{0,2,\ldots, N-1\},\ t\in \ZZ/N\ZZ\ .
\ee Some small manipulations using $r^2=-1\mod N$ and the  field identifications yield
\be f_{[l,s]}(\tau)=\sum_{\substack{l'=0\\l'\equiv s\bmod{2} }}^N\chi^{\paraf(N)'}_{[l',-rs]}(\tau)F_{l',l}(\tau)\ ,\qquad l\in\{0,1,\ldots, N\},\ s\in \ZZ/2N\ZZ,\ s\equiv l\bmod 2\ ,
\ee so that
\be \tilde T_{a,b}=\sum_{\substack{l,l'=0\\ l,l'\equiv a+b\bmod 2}}^N\chi^{\paraf(N)}_{[l,a-b]}(\tau)\chi^{\paraf(N)'}_{[l',-r(a+b)]}(\tau)F_{l',l}(\tau)\ .
\ee The partition function of $V^\natural$ decomposes as
\be Z_{V^\natural}(\tau)=\sum_{b\in \ZZ/N\ZZ} \tilde T_{0,b}(\tau)=\sum_{b\in \ZZ/N\ZZ} \sum_{\substack{l,l'=0\\ l,l'\equiv b\bmod 2}}^N\chi^{\paraf(N)}_{[l,-b]}(\tau)\chi^{\paraf(N)'}_{[l',-rb]}(\tau)F_{l',l}(\tau)\ .
\ee The centralizer $C_\MM(g)/\langle g\rangle$ of $g$ in the Monster group leaves the parafermion algebras $\paraf(N)$ and $\paraf(N)'$ fixed, so it acts on $\calC_{N,N}$ by automorphisms. More generally, for each $g$ in one of the classes above, there is an element $h$ in the normalizer $N_\MM(g)$ such that $hgh^{-1}=g^r$. This element exchanges the two parafermionic algebras $\paraf(N)$ and $\paraf(N)'$, and act on $\calC_{N,N}$ as an outer automorphism exchanging the $(l,l')$ and the $(l',l)$ module. One consequence of this automorphism is the character identity $F_{l',l}=F_{l,l'}$.

\subsection{Parafermions in the Leech lattice VOA and in Schellekens theories}\label{s:paraLeech}

Let us consider the Leech lattice VOA $V_\Lambda$; it is a holomorphic VOA of $c=24$ with $24$ currents generating $\alg{u}(1)^{24}$.  Given an automorphism $\nu\in \Aut(\Lambda)\cong Co_0$ of the Leech lattice $\Lambda$, we denote by
\be \Lambda^\nu:=\{\lambda\in \Lambda\mid \nu(\lambda)=\lambda\}
\ee the invariant sublattice, and by
\be \Lambda_\nu:=\{\mu\in \Lambda\mid \mu\cdot \lambda=0\ \forall \lambda\in \Lambda^\nu\}=(\Lambda^\nu)^\perp\cap \Lambda
\ee its orthogonal complement (the coinvariant sublattice).
Every such lattice automorphism $\nu\in \Aut(\Lambda)$ lifts to a VOA automorphism $\hat \nu\in \Aut(V_\Lambda)$ of the same order $N$.\footnote{This is quite a special property of the VOA $V_\Lambda$ and its automorphism group. In a general lattice VOA $V_L$, it might happen that for a given lattice automorphism $g\in \Aut(L)$ of order $N$, the standard lifts $\hat g\in \Aut(V_L)$ to VOA automorphisms have order at least $2N$ (order doubling).} Furthermore, one can choose the lift $\hat \nu$ of order $N$ such that it acts trivially on the subVOA $V_{\Lambda^\nu}$ associated with the $\nu$-fixed sublattice $\Lambda^\nu$. 
More precisely, given the conformal embedding \be V_{\Lambda^\nu}\otimes V_{\Lambda_\nu} \subset V_\Lambda\ee of the product of the two commuting subVOAs $V_{\Lambda^\nu}$ and $V_{\Lambda_\nu}$, one has that $\hat \nu$ restricts to the identity on $V_{\Lambda^\nu}$, while it acts on $V_{\Lambda_\nu}$ by an automorphism that preserves no current, $(V_{\Lambda_\nu})^{\hat\nu}_1=0$. We also have a conformal embedding of $\hat\nu$-fixed VOAs
\be\label{hatnuembed} V_{\Lambda^\nu}\otimes (V_{\Lambda_\nu})^{\hat\nu} \subset (V_\Lambda)^{\hat\nu}\ .
\ee 

The characteristic polynomial of a Conway element $\nu$ of order $N$ in the $24$-dimensional representation of $Co_0$ has the form
\be \prod_{\ell|N} (\nu^\ell -1)^{m_\ell}\ ,
\ee where the product is over all divisors $\ell$ of $N$, and  $m_\ell\in \ZZ$ are (possibly negative) integers with $\sum_{\ell|N} \ell m_\ell=24$.  Such numbers are usually symbolically represented by the `Frame shape' $\prod_{\ell|N} \ell^{m_\ell}$ of $\nu$; it is known that the Frame shape univocally determines the algebraic conjugacy class of $\nu$ in $Co_0$. For a $\nu$ with Frame shape $\prod_{\ell|N} \ell^{m_\ell}$, the $\hat\nu$-twisted and $\hat\nu$-twining partition functions are
\be\label{twistwinLeech} Z^{\hat\nu}(\tau,\xi)=\frac{\Theta_{\Lambda^\nu}(\tau,\xi)}{\prod_{\ell |N}\eta(\ell\tau)^{m_\ell}}\qquad Z_{\hat\nu}(\tau,\xi)=\sqrt{\frac{\prod_{\ell|N}\ell^{m_\ell}}{|(\Lambda^\nu)^*/\Lambda^\nu|}}\frac{\Theta_{(\Lambda^\nu)^*}(\tau,\xi)}{\prod_{\ell |N}\eta(\tau/\ell)^{m_\ell}}
\ee where the fugacities $\xi\in (\Lambda^\nu\otimes\RR)/\Lambda^\nu$ in the lattice theta series
\be \Theta_{L}(\tau,\xi)=\sum_{\lambda\in L}q^{\frac{\lambda^2}{2}}e^{2\pi i \xi\cdot\lambda}\ ,
\ee keep track of the charges with respect to the $\hat\nu$-invariant $\alg{u}(1)$ currents in $V_{\Lambda^\nu}$. From the expansion $\Theta_{(\Lambda^\nu)^*}(\tau,\xi)=1+\ldots$, it follows that the $\hat\nu$-twisted ground states are always in the trivial representation with respect to the $\hat \nu$-preserved current algebra $(V_\Lambda)^{\hat\nu}_1=(V_{\Lambda^\nu})^{\hat\nu}_1$. Thus, one of the conditions of theorem \ref{th:paraf} is satisfied.

For all conjugacy classes in $Co_0$ for which $\hat\nu$ is not anomalous, we computed the conformal weight of the $\hat\nu$-twisted sector. Remarkably, the only possible values that one obtains for a symmetry of order $N$ are $1-\frac{1}{N}$, $1$, $1+\frac{1}{N}$. The $40$ conjugacy classes with conformal weight  $1+\frac{1}{N}$ are exactly the ones for which the orbifold $V/\langle \hat\nu\rangle$ is the Monster VOA $V^\natural$. We are interested in the $52$ classes ($50$ algebraic classes) for which the conformal weight is $1-\frac{1}{N}$, that are
\begin{align}\label{paraclasses}
&1^{24},\ 1^8 2^8,\ \frac{1^{12} 6^{12}}{2^{12} 3^{12}},\ 1^6 3^6,\ \frac{1^8
	4^8}{2^8},\ 1^4 2^2 4^4,\ \frac{1^6 10^6}{2^6 5^6},\ 1^4 5^4,\ \frac{1^4 2^1 
	6^5}{3^4},\ \frac{1^5 3^1 6^4}{2^4},\ 1^2 2^2 3^2 6^2,\ \frac{1^4 14^4}{2^4 7^4},\ 1^3
7^3,\nonumber\\& \frac{1^4 8^4}{2^2 4^2},\ 1^2 2^1 4^1 8^2,\ \frac{1^3 18^3}{2^3 9^3},\ \frac{1^3
	9^3}{3^2},\ \frac{1^2 2^1  10^3}{5^2},\ \frac{1^3 5^1  10^2}{2^2},\ 1^2 11^2,\ \frac{1^4
	12^4}{3^4 4^4},\ \frac{1^2  3^2 4^2 12^2}{2^2 6^2},\ \frac{1^1 2^2
	3^1 12^2}{4^2},\ \frac{1^2 4^1 6^2 12^1}{3^2},\nonumber\\ &\frac{1^3 12^3}{2 3 4 6},\ \frac{1^2
	26^2}{ 2^213^2},\ 1^1 2^1 7^1 14^1,\ \frac{1^3 15^3}{3^3 5^3},\ \frac{1^2 5^2
	6^2 30^2}{ 2^2 3^210^2 15^2},\ 1^1 3^1 5^1 15^1,\ \frac{1^2 15^2}{3^1 5^1},\ \frac{1^2 16^2}{2^1
	8^1},\ \frac{1^2 9^1 18^1}{2^1 3^1},\ \frac{1^1 2^1 18^2}{6^1 9^1},\nonumber\\ &\frac{1^2 20^2}{4^2 5^2},\ \frac{1^1 2^1
	10^1  20^1}{4^1 5^1},\ \frac{1^2 21^2}{3^2 7^2},\ \frac{1^1 3^1  14^1 42^1}{2^1 6^1 7^1 21^1},\ 1^1 23^1,\ 1^1
23^1,\ \frac{1^2 4^1 6^1 24^2}{ 2^1 3^2 8^2 12^1},\ \frac{1^1 4^1 6^1 24^1}{3^1 8^1},\ \frac{1^1 4^1 7^1 28^1}{
	2^1 14^1},\\ &\frac{1^1 2^1  15^1 30^1}{ 3^1 5^1 6^1 10^1},\ \frac{1^1 6^1 10^1 15^1}{3^1 5^1},\ \frac{2^1 3^1 5^1 30^1}{
	6^1 10^1},\ \frac{1^1 6^1 11^1 66^1}{2^1 3^1 22^1 33^1},\ \frac{1^1 35^1}{5^1 7^1},\ \frac{1^1 36^1}{4^1 9^1},\ \frac{1^1
	39^1}{ 3^1 13^1},\ \frac{1^1 39^1}{ 3^1 13^1},\ \frac{1^1 4^1 6^1 10^1 15^1 60^1}{2^1 3^1 5^1 12^1 20^1 30^1}\nonumber
\end{align} In these cases, theorem \ref{th:paraf} applies, so that $V_\Lambda$ is self-orbifold under the corresponding $\hat\nu$, and  there is an embedding of the $\ZZ_N$ parafermion algebra $\paraf(N)$ in $(V_\Lambda)^{\hat\nu}$. In fact, one can verify by a case by case calculation, using the data about the fixed point sublattices of $\Lambda$ in \cite{HoehnMason2016}, that for all these Frame shapes the ratio $\frac{\prod_{\ell|N}\ell^{m_\ell}}{|(\Lambda^\nu)^*/\Lambda^\nu|}$ appearing in \eqref{twistwinLeech} is $1$, so that there is a unique $\hat\nu$-twisted ground state with conformal weight $1-1/N$. This implies that the extended VOA $\tilde V$ defined in the proof of theorem \ref{th:paraf} contains exactly the three currents generating $\alg{su}(2)_N$ plus the ones in $V^{\hat\nu}$, and none more. It is tempting to conjecture that this non-degeneracy is a general feature of theories where theorem \ref{th:paraf} applies, because it seems difficult to obtain a consistent current algebra in $\tilde V$ otherwise.

We can actually be more precise about the embedding of $\paraf(N)$ in $(V_\Lambda)^{\hat\nu}$. Because the $\hat\nu$-twisted ground states are in the vacuum representation with respect to the subVOA $V_{\Lambda^\nu}$ appearing in \eqref{hatnuembed}, it follows that the algebra $\paraf(N)$ embeds in $(V_{\Lambda_\nu})^{\hat\nu}$. If we denote by
\be \calC_X^{\Lambda}:= \Com(\paraf(N),(V_\Lambda)^{\hat\nu})\qquad \calC_X^{\Lambda_\nu}:= \Com(\paraf(N),(V_{\Lambda_\nu})^{\hat\nu})
\ee the commutants of $\paraf(N)$ in $V_\Lambda$ and $V_{\Lambda_\nu}$, where $X$ is one of the Frame shapes above, then
there are conformal embeddings
\be  V_{\Lambda^\nu}\otimes \paraf(N)\otimes \calC_X^{\Lambda_\nu} \quad \subset\quad V_{\Lambda^\nu}\otimes(V_{\Lambda_\nu})^{\hat\nu} \quad \subset \quad (V_{\Lambda})^{\hat\nu}\quad \subset \quad V_\Lambda\ .
\ee 
For later reference, let us describe in more detail the decomposition of $V$ and the sectors $V_{a,b}$ into $(V_{\Lambda_\nu})^{\hat\nu}\otimes V_{\Lambda^\nu}$-modules. For simplicity, let us focus on the case where $N$ is prime, so that the $\nu^a$ invariant and coinvariant lattices are always the same, $\Lambda^{\nu^a}=\Lambda^\nu$ and $\Lambda_{\nu^a}=\Lambda_\nu$, for all $a\neq 0\mod N$.  From the general properties of even unimodular lattices and their automorphisms, we know that the Leech lattice decomposes as
\be\label{gluelattice} \Lambda=\bigcup_{\gamma\in (\Lambda^\nu)^*/\Lambda^\nu} (\gamma+ \Lambda^\nu)\oplus (\iota(\gamma)+\Lambda_\nu)\ ,
\ee where $\iota:(\Lambda^\nu)^*/\Lambda^\nu \to (\Lambda_\nu)^*/\Lambda_\nu$ is an isomorphism of abelian groups, see for example \cite{Nikulin1980}. There is a
corresponding decomposition of $V_\Lambda$ into modules for the sublattice VOAs $V_{\Lambda^\nu}\otimes V_{\Lambda_\nu}$ \be\label{latticeVOAdecomp} V_\Lambda=\bigoplus_{\gamma\in (\Lambda^\nu)^*/\Lambda^\nu} V_{\gamma+ \Lambda^\nu}\otimes V_{\iota(\gamma)+\Lambda_\nu}\ ,
\ee where $V_{\gamma+ \Lambda^\nu}$ and $V_{\iota(\gamma)+\Lambda_\nu}$ have conformal weights in $\frac{\gamma^2}{2}+\ZZ$ and $-\frac{\gamma^2}{2}+\ZZ$, respectively.

Because $\nu\in \Aut(\Lambda)$ fixes $\Lambda^\nu$ pointwise, the induced action on $(\Lambda^\nu)^*/\Lambda^\nu$ must be trivial as well. Compatibility with \eqref{latticeVOAdecomp} implies that the restriction $\hat \nu_{\rvert V_{\Lambda_\nu}}$ maps each $V_{\Lambda_\nu}$-module $V_{\iota(\gamma)+\Lambda_\nu}$ to itself. This implies that there is a decomposition similar to \eqref{latticeVOAdecomp} for all $\hat \nu^a$-twisted modules 
\be V_\Lambda(\hat\nu^a)=\bigoplus_{\gamma\in (\Lambda^\nu)^*/\Lambda^\nu} V_{\gamma+ \Lambda^\nu}\otimes V_{\iota(\gamma)+\Lambda_\nu}(\hat\nu^a)\ ,
\ee and for all $V_\Lambda^{\hat\nu}$-modules $V_{a,b}\subset V_\Lambda(\hat\nu^a)$ 
\be\label{Lambdanudecomp} V_{a,b} = \bigoplus_{\gamma\in (\Lambda^\nu)^*/\Lambda^\nu} V_{\gamma+ \Lambda^\nu}\otimes V_{\iota(\gamma)+\Lambda_\nu,a,b}\ .
\ee
Here, $V_{\gamma+ \Lambda^\nu}$ are the same (untwisted)  $V^{\Lambda^\nu}$-modules as in \eqref{latticeVOAdecomp}, $V_{\iota(\gamma)+\Lambda_\nu}(\hat\nu^a)$  denote the $\hat \nu^a$-twisted $V^{\Lambda_\nu}$-modules, and $V_{\iota(\gamma)+\Lambda_\nu,a,b}$ are the eigenspaces of $\hat\nu$ in each $V_{\iota(\gamma)+\Lambda_\nu}(\hat\nu^a)$.
Notice that, for $a\neq 0\mod N$, the action of $\hat\nu$ is defined by
\be V_{\iota(\gamma)+\Lambda_\nu,a,b}:=\CC\left\{v\in V_{\iota(\gamma)+\Lambda_\nu}(\hat\nu^a)\mid h(v)\in \frac{ab}{N}-\frac{\gamma^2}{2}+\ZZ\right\}\ ,\qquad a\neq 0\mod N\ ,
\ee  so as to match our conventions that the conformal weights of $V_{a,b}$ are in $\frac{ab}{N}+\ZZ$.

Finally, the decomposition of each $V_{a,b}$ into modules of $V_{\Lambda^\nu}\otimes \paraf(N)\otimes \calC_X^{\Lambda_\nu} $  reads
\be V_{a,b} = \bigoplus_{\gamma\in (\Lambda^\nu)^*/\Lambda^\nu} \bigoplus_{\substack{l=0\\l\equiv a+b\bmod{2}}}^N V_{\gamma+ \Lambda^\nu}\otimes \paraf(N,[l,a-b])\otimes  \calC_X(\iota(\gamma)+\Lambda_\nu,[l,a+b]) \ ,
\ee  so that
\be\label{VLeechparaf} V_\Lambda=\bigoplus_{b\in \ZZ/N\ZZ} V_{0,b}(\tau)=\bigoplus_{b\in \ZZ/N\ZZ} \bigoplus_{\gamma\in (\Lambda^\nu)^*/\Lambda^\nu}\bigoplus_{\substack{l=0\\l\equiv b\bmod{2}}}^NV_{\gamma+ \Lambda^\nu}\otimes \paraf(N,[l,-b])\otimes  \calC_X(\iota(\gamma)+\Lambda_\nu,[l,b]) \ .
\ee Here, $\calC_X(\iota(\gamma)+\Lambda_\nu,[l,a+b]) $ are the $ \calC_X^{\Lambda_\nu}$-modules such that
\be\label{finedecomp} V_{\iota(\gamma)+\Lambda_\nu,a,b}= \bigoplus_{\substack{l=0\\l\equiv a+b\bmod{2}}}^N \paraf(N,[l,a-b])\otimes  \calC_X(\iota(\gamma)+\Lambda_\nu,[l,a+b])\ . \ee
The characters $f_{\iota(\gamma)+\Lambda_\nu,[l,a+b]}(\tau)$ for the modules $\calC_X(\iota(\gamma)+\Lambda_\nu,[l,a+b])$ can be easily obtained by repeating the procedure in section \ref{s:commutant}, while keeping track of the charges under  $\hat\nu$-invariant $\alg{u}(1)$ currents in $V_{\Lambda^\nu}$. In particular,
\be \tilde T_{a,b}(\tau,\xi )=\sum_{\substack{l=0\\l\equiv a+b\bmod{2}}}^N \chi^{\paraf(N)}_{[l,a-b]}(\tau)f_{[l,a+b]}(\tau,\xi )\  .
\ee where 
\be f_{[l,a+b]}(\tau,\xi )=\sum_{\gamma\in (\Lambda^\nu)^*/\Lambda^\nu}\frac{\Theta_{\gamma+\Lambda^\nu}(\tau,\xi)}{\eta(\tau)^{\dim \Lambda^\nu}}f_{\iota(\gamma)+\Lambda_\nu,[l,a+b]}(\tau)\ ,
\ee 
are the characters of the $ \calC_X^{\Lambda}$-modules
\be \calC_X([l,a+b])=\bigoplus_{\gamma\in (\Lambda^\nu)^*/\Lambda^\nu}V_{\gamma+ \Lambda^\nu}\otimes\calC_X(\iota(\gamma)+\Lambda_\nu,[l,a+b])\ .
\ee

This has a number of interesting consequences for other holomorphic VOAs. First of all, it was proved in \cite{MollerScheithauer2023,MollerScheithauer2024} that, besides the Monster module $V^\natural$, all the other $70$ holomorphic VOAs of central charge $24$ with currents (classified by Schellekens in \cite{Schellekens1993})  can be obtained from $V_\Lambda$ by orbifolding a cyclic group $\langle \hat\nu h\rangle$ corresponding to a `generalized deep hole'. Here, the generator $\hat\nu h$ is the product of the standard lift $\hat \nu$ of a lattice automorphism $\nu\in \Aut(\Lambda)$, and an inner automorphism  $h\in (\Lambda\otimes \RR)/\Lambda\cong U(1)^{24}$ acting non-trivially only on $V_{\Lambda^\nu}$, and that can be identified with an element of order $N$ in $(\Lambda^\nu\otimes \QQ)/\Lambda^\nu$. In particular, there are $11$ $Co_0$-conjugacy classes of automorphisms $\nu$ involved in this construction, namely\footnote{These $11$ algebraic classes also play a role in two other  uniform constructions of Schellekens theories, see \cite{Hoehn2017GenusMoonshine,HohnMoller2022}.}
\be\label{goodnus} 1^{24},\quad 1^82^8,\quad 1^63^6,\quad 1^42^24^4,\quad 1^45^4,\quad 1^22^23^26^2,\quad 1^37^3,\quad 1^22^14^18^2,
\ee
\be 2^{12},\quad 2^36^3,\quad 2^210^2\ .
\ee 
 For three of these classes, $2^{12}$, $2^36^3$, and $2^210^2$, the automorphism $\hat\nu$, by itself, is anomalous; the product $\hat\nu h$ is not anomalous, but its order is twice the order of $\hat\nu$. Excluding these three cases and the identity, for the remaining $7$ classes the conformal weight of the $\hat\nu$ twisted sector is $1-1/N$. The symmetry $\hat\nu h$ restricts to an automorphism of the subVOA $V_{\Lambda^\nu}\otimes V_{\Lambda_\nu}$, with $\hat\nu$ acting non-trivially only on $V_{\Lambda_\nu}$ and $h$ acting non-trivially only on $V_{\Lambda^\nu}$. Therefore, the orbifold theory $V_\Lambda/\langle \hat\nu h\rangle$ contains $(V_{\Lambda^\nu})^h\otimes (V_{\Lambda_\nu})^{\hat\nu}$ as a conformally embedded subVOA. But we have already proved that the $\ZZ_N$ parafermion algebra can be embedded in $(V_{\Lambda_\nu})^{\hat\nu}$. We conclude that the orbifold theory contains $\paraf(N)$ as a subVOA, commuting with the subVOA generated by the currents.
 
 More into detail, let $V_S$ denote Schellekens' theory obtained from a generalized-deep-hole orbifold $V_S\cong V_\Lambda/\langle \hat\nu h\rangle$, where $\nu$ one of the $7$ non-trivial classes in \eqref{goodnus}. Then, one gets the decomposition
 \be\label{Schellorb} V_S=\bigoplus_{a,b\in \ZZ/N\ZZ} \bigoplus_{\gamma\in (\Lambda^\nu)^*/\Lambda^\nu} V_{\gamma+ \Lambda^\nu,a,-b}\otimes V_{\iota(\gamma)+\Lambda_\nu,a,b}\ .
 \ee Here, $V_{\iota(\gamma)+\Lambda_\nu,a,b}$ are the same $(V_{\Lambda_\nu})^{\hat\nu}$-modules appearing in \eqref{Lambdanudecomp}, and $V_{\gamma+ \Lambda^\nu,a,b}$ are the $(V_{\Lambda^\nu})^{h}$-modules obtained as the $h=e^{\frac{2\pi i b}{N}}$ eigenspace in the $h^a$-twisted sector. For consistency with our previous conventions, the action of $h$ on the $h^a$-twisted sectors must be defined in such a way that the conformal weights of $V_{\gamma+ \Lambda^\nu,a,b}$ take values in $\frac{ab}{N}+\frac{\gamma^2}{2}+\ZZ$. From \eqref{Schellorb} and \eqref{finedecomp}, one immediately gets the decomposition of Schellekens theory $V_S$ into modules of $\paraf(N)$ and its commutant $\Com(\paraf(N),V_S)$ in $V_S$
\be\label{VSdecomp}  V_S=\bigoplus_{a,b\in \ZZ/N\ZZ} \bigoplus_{\gamma\in (\Lambda^\nu)^*/\Lambda^\nu} \bigoplus_{\substack{l=0\\l\equiv a+b\bmod{2}}}^N  V_{\gamma+ \Lambda^\nu,a,-b}\otimes \paraf(N,[l,a-b])\otimes  \calC_X(\iota(\gamma)+\Lambda_\nu,[l,a+b])\ .\ee It follows that\footnote{One might be worried that the parafermionic vacuum might also appear in the $l=N$ term in \eqref{VSdecomp}, due to the field identification $\paraf(N,[N,\pm N])\cong \paraf(N,[0,0])$. However, if we take $a,b$ in the same range $\{0,\ldots, N-1\}$, their difference $a-b$ is always different from $\pm N$. }
\be \Com(\paraf(N),V_S)\cong  \bigoplus_{a=0}^{N-1} \bigoplus_{\gamma\in (\Lambda^\nu)^*/\Lambda^\nu}  V_{\gamma+ \Lambda^\nu,a,-a}\otimes   \calC_X(\iota(\gamma)+\Lambda_\nu,[0,2a]) \ .
\ee
 
\subsection{An example at central charge $32$}

Let us now describe an example of  holomorphic VOA $V$ of central charge $c=32$ that is self-orbifold with respect to cyclic groups of order $3$ or $7$. The theory $V$ is obtained as the lattice VOA for a $32$-dimensional even unimodular lattice $L$, whose roots (elements of square length $2$) generate a sublattice $E_6\subset L$ isomorphic to the $E_6$ root lattice.  The orthogonal complement $T_{26}:=L\cap E_6^\perp$ of the root lattice  is the unique (up to isomorphism) $26$-dimensional even lattice with determinant $3$ and with shortest non-zero vectors of square norm $4$, and was studied in \cite{Borcherds1984Leech}. Therefore, the currents of $V$ generate an affine Kac-Moody algebra $\mathfrak{e}_{6,1}\oplus \alg{u}(1)^{26}$. As in eq.\eqref{gluelattice}, the unimodular lattice $L$ is obtained as the union of $3$ cosets
\be L=T_{26}\oplus E_6\ \cup\ (v+T_{26}\oplus E_6)\ \cup\ (-v+T_{26}\oplus E_6)\ ,
\ee where $v\in T_{26}^*\oplus E_6^*$ is a suitable `glue vector' whose square norm is an even integer and such that $3v$ is the smallest multiple of $v$ contained in $T_{26}\oplus E_6$. This means that $V$ decomposes into the sum of three irreducible modules for the subVOA $V_{T_{26}}\otimes V_{E_6}$, where $V_{E_6}$ is generated by the $\alg{e}_{6,1}$ currents.

The automorphism group of $L$ is $( {}^3D_4(2).3\times W(E_6)).\ZZ_2$, where the $\ZZ_3$-extended twisted Chevalley group ${}^3D_4(2).3$ acts trivially on the $E_6$ root lattice, the Weyl group $W(E_6)$ acts trivially on $T_{26}$, and the $\ZZ_2$ is the usual reflection in all $32$ directions.

The group ${}^3D_4(2).3$ contains elements $\nu_3$ and $\nu_7$ of order $3$ and $7$, respectively, whose characteristic polynomials on the $26$-dimensional representation are
\be (\nu_3-1)^8(\nu_3^3-1)^6\ ,\qquad (\nu_7-1)^5(\nu_7^7-1)^3\ .
\ee Being of odd order, their standard lifts $\hat\nu_3$ and $\hat\nu_7$ to symmetries of the VOA $V$ have the same order as the lattice automorphisms. The conformal weights of the $\nu_3$-twisted and $\nu_7$-ground states can be computed from their eigenvalues on the $26$ free bosons of the $T_{26}$ lattice component, and are $\frac{2}{3}$ and $\frac{6}{7}$, respectively. Such ground states are necessarily in the trivial representation of the $\alg{e}_{6,1}$ Lie algebra, because $\hat\nu_3$ and $\hat\nu_7$ act trivially on the subVOA $V_{E_6}$. They are also neutral with respect to the $\hat\nu_3$ or $\hat\nu_7$ invariant $\alg{u}(1)$ currents -- this is always true for standard lifts of lattice automorphisms.

Therefore, theorem \ref{th:paraf} tells us that $V$ is self-orbifold with respect to both $\hat \nu_3$ and $\hat \nu_7$, and that there are embeddings of the $\ZZ_3$ and $\ZZ_7$ parafermion algebras $\paraf(3)$ and $\paraf(7)$ in the $\hat \nu_3$- and $\hat\nu_7$-fixed subalgebras of $V$. Furthermore, the VOA $V$ admits fusion categories of topological defects associated with the $\paraf(3)$ and $\paraf(7)$ parafermion algebras, and whose preserved subVOAs are, respectively, the dual pairs of commuting subVOAs $\paraf(3)\times \mathcal{C}_3$ and $\paraf(7)\times \mathcal{C}_7$, where
\be \mathcal{C}_3=\Com(\paraf(3),V)\ ,\qquad \mathcal{C}_7=\Com(\paraf(7),V)\ .
\ee The commutants $\mathcal{C}_3$ and $\mathcal{C}_7$ of the parafermions algebras are VOAs with central charges
\be c(\mathcal{C}_3)=33-\frac{9}{5}=31+\frac{1}{5}\ ,\qquad c(\mathcal{C}_7)=33-\frac{7}{3}=30+\frac{2}{3}\ .
\ee The currents of  $\mathcal{C}_3$ and $\mathcal{C}_7$ generate the affine algebras $\mathfrak{u}(1)^{14}\oplus \alg{e}_{6,1}$ and $\mathfrak{u}(1)^{8}\oplus \alg{e}_{6,1}$, respectively.

\subsection{A non-holomorphic example from heterotic strings}

Let us consider an example of non-holomorphic conformal field theory with central charges $(c,\tilde c)=(20,6)$, arising from the compactification of heterotic string on $T^4$. We take the convention where the supersymmetric sector is anti-holomorphic. The CFT contains $20$ holomorphic free scalars and $4$ anti-chiral free scalars and free fermions. The Narain lattice of winding-momenta is an indefinite even unimodular lattice $\Gamma$ of signature $(4,20)$. The moduli space of such compactifications is
\be (O(4)\times O(20))\backslash O(4,20)/O(4,20,\ZZ)\ .
\ee This space can be interpreted as a Grassmannian parametrising $4$-dimensional positive definite real subspaces $\Pi_R$ in the real space $\Gamma\otimes \RR\cong \RR^{4,20}$, quotiented by the discrete T-duality group $O(4,20,\ZZ)=\Aut(\Gamma)$, the automorphism group of the lattice $\Gamma$. Given such a $\Pi_R$, let $\Pi_L=\Pi_R^\perp$ denote its orthogonal complement of signature $(0,20)$ in $\RR^{4,20}$. The choice of $\Pi_R\subset \Gamma\otimes\RR$ determines the decomposition of each winding-momentum vector $p\in \Gamma$ into its holomorphic (left-moving) and antiholomorphic (right-moving) components $p=(p_L;p_R)$, with $p_L\in \Pi_L$ and $p_R\in \Pi_R$. The partition function is given by
\be Z_s(\tau,\bar\tau)=\frac{\Theta_{\Gamma,\Pi_R}(\tau,\bar\tau)}{\eta(\tau)^{20}\eta(\bar\tau)^4}\frac{\theta_s(\bar\tau)^2}{\eta(\bar\tau)^2}
\ee where $s$ labels the $4$ different spin structures on the torus, and the moduli dependence is encoded in the Narain theta series
\be \Theta_{\Gamma,\Pi_R}(\tau,\bar\tau)=\sum_{p=(p_L;p_R)\in \Gamma}q^{\frac{p_L^2}{2}}\bar q^{\frac{p_R^2}{2}}\ .
\ee At a generic point in the moduli space, the holomorphic current algebra is $\alg{u}(1)^{20}$. It gets enhanced to a non-abelian current algebra at special points where $\Pi_R$ is orthogonal to a vector $p\in \Gamma$ with square norm $-2$ (a root), where the corresponding vertex operators are holomorphic with conformal weight $1$.

When two subspaces $\Pi_R$ and $\Pi_R'$ are related by a transformation in $O(4,20,\ZZ)$, then the corresponding CFTs are related by T-duality, and therefore they are equivalent. Given a certain $\Pi_R\equiv\Pi $, the setwise stabilizer $\tilde G_{\Pi}=\mathrm{Stab}_\Pi\subset O(4,20,\ZZ)$ of the subspace $\Pi$ corresponds to the group of self-dualities, so it gives rise to (invertible) symmetries of the corresponding CFT. The subgroup $G_\Pi\subset \tilde G_\Pi$ that fixes $\Pi$ pointwise, rather than setwise, corresponds to symmetries that fix the anti-holomorphic (large) $\CN=4$ superconformal algebra. If we restrict ourselves to points on the moduli space where the holomorphic current algebra is the generic $\alg{u}(1)^{20}$, then the classification of the possible groups $G_\Pi$ is essentially the same as the classification of the groups of symmetries of non-linear sigma models on K3 in \cite{Gaberdiel:2011fg}. 

The main result of \cite{Gaberdiel:2011fg} is that, for all $\Pi$, the group $G_\Pi$ is always finite, and it is isomorphic  to a subgroup $H$ of the Conway group $Co_0=\Aut(\Lambda)$ that fixes a sublattice of $\Lambda$ of rank at least $4$ (a classification of the subgroups of $Co_0$ arising in this way is given in \cite{HoehnMason2016}). Vice versa, every such lattice stabilizer $H\subset \Aut(\Lambda)$ arises as a symmetry group of a heterotic model at some point in the moduli space with generic current algebra. This correspondence works not just at the level of abstract groups: the actions of $G_\Pi \subset \Aut(\Gamma)$ and $H\subset \Aut(\Lambda)$ on the respective $24$-dimensional lattices are closely related. In particular, if we denote by $\Gamma^G$ and $\Lambda^H$ the (pointwise) fixed sublattices, and  by $\Gamma_G=\Gamma\cap (\Gamma^G)^\perp$ and $\Lambda_H=\Lambda\cap (\Lambda^H)^\perp$ the respective orthogonal complements, one has an isomorphism
\be \Gamma_G\cong \Lambda_H(-1)\ ,
\ee of coinvariant lattices that reverses the quadratic form, and that relates the action of $G$ on $\Gamma_G $ and the action of $H$ on $\Lambda_H$ \cite{Gaberdiel:2011fg}. In particular, each $g\in G$ has the same eigenvalues in the $24$-dimensional representation $\Gamma\otimes \RR$ as the corresponding $\nu \in H$.

Let us choose an element $\nu\in \Aut(\Lambda)$ whose Frame shape $X=\prod_{\ell|N}  \ell^{m_\ell}$ is one of the $52$ eq.\eqref{paraclasses}, and such that the dimension $\sum_{\ell|N} m_\ell$ of the $\nu$-fixed sublattice is greater than $4$. By the previous remarks, there is a point in the moduli space of heterotic compatifications on $T^4$ with a symmetry $g\in G_\Pi\subset \Aut(\Gamma)$, such that $g$ has the same Frame shape as $\nu$, and the coinvariant lattices are the essentially the same, $\Gamma_g\cong \Lambda_\nu(-1)$. The fact that $g$ is a symmetry for this model means that the $4$-subspace $\Pi\subset \Gamma\otimes \RR$ is pointwise fixed by $g$. Equivalently, this means that $\Pi\subseteq \Gamma^g\otimes \RR$, and this is true if and only if $\Gamma_g\in \Pi^\perp$. This means that for all vectors $p\in \Gamma_g$, the splitting with respect to $\Pi_R\equiv\Pi$ and $\Pi_L\equiv \Pi^\perp$ is of the form $p=(p_L;0)$. Thus, all vertex operators $\mathcal{V}_p$, $p\in \Gamma_g$, are purely holomorphic, and generate a subVOA of the chiral algebra of this model. But because $\Gamma_g\cong \Lambda_\nu(-1)$, this subVOA is exactly the lattice VOA $V_{\Lambda_\nu}$ appearing in section \ref{s:paraLeech}. One can then just repeat the same construction as in that section, to find that the chiral algebra of every heterotic model with such a symmetry $g$ of order $N$ contains a parafermion algebra $\paraf(N)$. The NS or Ramond Hilbert space $\CH_{NS/R}$ of the heterotic model decomposes as
\be \CH_{NS/R}=\bigoplus_{b\in \ZZ/N\ZZ} \bigoplus_{\gamma\in (\Gamma^g)^*/\Gamma^g}\bigoplus_{\substack{l=0\\l\equiv b\bmod{2}}}^N\CH_{NS/R,\gamma+ \Gamma^g}\otimes \paraf(N,[l,-b])\otimes  \calC_X(\iota(\gamma)+\Lambda_\nu,[l,b]) \ .
\ee 
where $\calC_X(\iota(\gamma)+\Lambda_\nu,[l,b]) $ are exactly the same modules as in \eqref{VLeechparaf}, while the holomorphic $V_{\Lambda^\nu}$-modules $V_{\gamma+ \Gamma^g}$ in \eqref{VLeechparaf} are replaced by some `non-holomorphic' $\CH_{NS/R,\gamma+ \Gamma^g}$. More precisely, $\CH_{NS/R,\gamma+ \Gamma^g}$ includes the (NS or R) representations with winding-momentum $p\in \gamma+ \Gamma^g$, for the algebra generated by the $4$ antiholomorphic free fermions, and the chiral and antichiral free bosons corresponding to the (possibly indefinite) lattice $\Gamma^g$.

As in section \ref{s:topdefparaf}, these heterotic models admit a fusion category of topological defects that is isomorphic to the modular tensor category of $\paraf(N)$. The simple defects $\CL_{[l,t]}$ have twining partition function
\be Z^{[l,t]}_s(\tau,\bar\tau)=\sum_{b\in \ZZ/N\ZZ} \sum_{\substack{\ell=0\\\ell\equiv b\bmod{2}}}^N \lambda_{[l,t],[\ell ,b]}\chi^{\paraf(N)}_{[\ell ,b]}(\tau)f_{s,[\ell,-b]}(\tau,\bar\tau)\ .
\ee Here, $s$ is the spin structure, and
\be f_{s,[l,t]}(\tau,\xi )=\sum_{\gamma\in (\Gamma^g)^*/\Gamma^g}\frac{\Theta_{\gamma+\Gamma^g}(\tau,\bar\tau)}{\eta(\tau)^{\dim \Gamma^g-4}\eta(\bar\tau)^{4}}\frac{\theta_s(\bar\tau)^2}{\eta(\bar\tau)^2}f_{\iota(\gamma)+\Lambda_\nu,[l,t]}(\tau)\ , 
\ee with $f_{\iota(\gamma)+\Lambda_\nu,[l,t]}(\tau)$ the same characters as in section \ref{s:paraLeech}.

To the best of our knowledge,  many of the topological defects of heterotic models obtained in this way were not known before. A large class of topological defects in toroidal (super)string compactifications were described in \cite{Bachas:2012bj}. However, all defects in \cite{Bachas:2012bj} preserve all the (holomorphic and anti-holomorphic) $\alg{u}(1)$ currents of the model, possibly up to an automorphism, and we checked by direct calculation that this is not true for most of the $\CL_{[l,t]}$ with $l\neq 0$.

\section{Twisted sector in a non-trivial representation of the current algebra}\label{s:extensions}

The theorems stated in the previous section assume that the simple current module contains a vector that is in the trivial representation of the current algebra generated by the weight $1$ fields in the VOA. Even in the case where this is not true, several interesting results can be derived, provided that the conformal weight of the twisted sector is smaller than $1$. We illustrate this generalization in some examples.

\subsection{Self-orbifold symmetries in the $E_8$ lattice VOA}

In this section, let $V\equiv V_{E_8}$ denote the bosonic $E_8$ lattice VOA. The cyclic groups with respect to which $V$ is self-orbifold were recently studied in \cite{Burbano:2021loy} using Lie-algebraic methods. A general strategy to determined the corresponding duality defects was also provided, and several examples were described explicitly.

Here, we show how a variant of our method can be used to obtain similar results.

\subsubsection{The non-anomalous  $\ZZ_2$ symmetry}

Let us consider an automorphism of order $2$ of $V$. It is known that there is only one conjugacy class of automorphisms $g\in \Aut(V_{E_8})\cong E_8$ that have order $2$ and are non-anomalous. The $g$-fixed subVOA is the lattice VOA $V_{D_8}$, generated by the currents of $\alg{so}(16)_1$ current algebra
\be V^g\equiv V_{0,0}=\alg{so}(16)_1,\ .
\ee Recall that $\alg{so}(16)_1$ has four irreducible representations that we label as $L_{\alg{so}(16)}(1,x)$ where $x$ is either $0$ (vacuum), $v$ (vector), $s$ (spinor) or $c$ (conjugate spinor). The eigenspace of $g$ with eigenvalue $-1$ is the $s$ representation $V_{0,1}\cong L_{\alg{so}(16)}(1,s)$, while the $g$-twisted sector $V_{1,0}\oplus V_{1,1}$ has components
\be V_{1,0}\cong  L_{\alg{so}(16)}(1,c)\ ,\qquad   V_{1,1}\cong L_{\alg{so}(16)}(1,v)\ .
\ee In particular, there are $8$ $g$-twisted ground states of conformal weight $1/2$ in $V_{1,1}$. While none of the twisted ground states are in the vacuum representation of the current algebra $\alg{so}(16)_1$, we can still apply our general procedure with $N=2$. We introduce a $c=1$ lattice VOA $W_0=V_{2\ZZ}$ based on the lattice $\sqrt{2N}\ZZ=2\ZZ$, with modules $W_m=V_{\frac{m}{2}+2\ZZ}$ labeled by $m\in \ZZ/4\ZZ$ of conformal weight $\frac{m^2}{8}$. We extend the product $V_{0,0,0}=W_0\times V_{0,0}$ by the weight $1$ simple current $V_{2,1,1}=W_2\times V_{1,1}$
\be \tilde V=V_{0,0,0}\oplus V_{2,1,1}\ .
\ee The modules $\tilde V_l$, $l\in\ZZ/4\ZZ$ are given by
\be \tilde V_0\cong \tilde V\qquad \tilde V_1=V_{1,0,1}\oplus V_{-1,1,0}\qquad \tilde V_{-1}=V_{-1,0,1}\oplus V_{1,1,0}\qquad \tilde V_2=V_{2,0,0}\oplus V_{0,1,1}\ .
\ee The $V_{0,0,0}=W_0\times V_{0,0}$ component of $\tilde V$ contains the $\alg{u}(1)\cong \alg{so}(2)$ current algebra of $W_0$ and the $\alg{so}(16)_1$ of $V_{0,0}$. The $V_{2,1,1}$ contains $32$ additional currents transforming in the $L_{\alg{so}(2)}(1,v)\otimes L_{\alg{so}(16)}(1,v)$ representation of $\alg{so}(2)_1\, \alg{so}(16)_1$. Altogether, the current algebra of $\tilde V$ is, therefore, $\alg{so}(18)_1$, and the four modules $\tilde V_0$, $\tilde V_1$, $\tilde  V_{-1}$, $\tilde V_2$ are the $0$, $s$, $c$, $v$ representations of $\alg{so}(18)_1$. The algebra $\tilde V$ has a group $SO(18)/\ZZ_2$ of inner automorphisms. Consider any involution $h\in SO(18)$ that flips the sign of the $\alg{so}(2)$ current in $W_0$ and maps the $\alg{so}(16)_1$ subalgebra to itself. This involution must be contained in the $S(O(2)\times O(16))$ subgroup of $SO(18)$, and both projections $h_2\in O(2)$ and $h_{16}\in O(16)$ must have negative determinant. Being an inner automorphism of $\tilde V$, $h$ must map each $\tilde V_l$ to itself, and it must exchange the components $V_{m,a,b}$ with opposite $\alg{so}(2)$ charge $m$. In particular $h$ maps $V_{\pm 1, 0,1}=W_{\pm 1}\otimes V_{0,1}$ to $V_{\mp 1,1,0}=W_{\mp 1}\otimes V_{1,0}$. Thus $h_{16}$ is an automorphism of $V_{0,0}\cong V^g\cong \alg{so}(16)_1$ that exchanges $V_{0,1}$ and $V_{1,0}$, and therefore provides an isomorphism between $V=V_{0,0}\oplus V_{0,1}$ and $V/\langle g\rangle=V_{0,0}\oplus V_{1,0}$.

For each choice of $h\equiv (h_2,h_{16})$, there is a corresponding duality defect operator $\hat \CN$ that annihilates the $V_{0,1}$ component of $V$ and acts by $\sqrt{2} h_{16}$ on $V_{0,0}$. The twining partition function is
\be \Tr_{V}(\hat \CN q^{L_0-\frac{c}{24}})=\sqrt{2}\Tr_{V_{0,0}}(h_{16}q^{L_0-\frac{c}{24}})\ .
\ee While it is not too difficult to compute this twining partition function directly, one can also use an argument analogous to \eqref{dualitytwin} to get the formula
\be \Tr_{V}(\hat \CN q^{L_0-\frac{c}{24}})=\sqrt{2}\frac{\eta(2\tau)}{\eta(\tau)}\ch^{\alg{so}(18)_1}_0(\tau,\xi^{(r)})\ ,
\ee
where the $\alg{so}(18)_1$ vacuum character is given by
\be \ch^{\alg{so}(18)_1}_0(\tau,\xi)=\frac{1}{\eta(\tau)^9}\sum_{\substack{n_1,\ldots,n_9\in \ZZ\\ n_1+\ldots + n_9\in 2\ZZ}}q^{\frac{1}{2}(n_1^2+\ldots n_9^2)}e^{2\pi i (n_1\xi_1+\ldots+n_9\xi_9)}\ ,\qquad \xi=(\xi_1,\ldots,\xi_9)\in \RR^9\ ,
\ee and
\be \xi^{(r)}=(\underbrace{\frac{1}{2}\,\ldots,\frac{1}{2}}_{r\text{ times}},0,\ldots,0)\ ,\qquad 1\le r\le 9\ .
\ee Here, $2r$ the number of $-1$ eigenvalues of the $SO(18)$ inner involution $h$. By varying the value of $r$, one gets only $4$ different duality defects, because $\xi^{(r)}$ or $\xi^{(9-r)}$ give the same result.


\subsubsection{$\ZZ_3$ symmetry and all the rest}

Let us try to apply a similar argument to the non-anomalous $\ZZ_3$ symmetry of $V_{E_8}$. The fixed subVOA is $V^g=V_{0,0}=V_{A_2\oplus E_6}$. The $g$- and $g^{-1}$-twisted sectors have $3$ states of conformal weight $1/3$, respectively in the $(3,1)$ and $(\bar 3,1)$ representation of $\alg{su}(3)_1\alg{e}_{6,1}$, and $27$ states of conformal weight $2/3$ respectively in the $(1,27)$ and $(1,\bar{27})$ representation of  $\alg{su}(3)_1\alg{e}_{6,1}$. Thus, once again we have twisted states with conformal weight $1-1/N$ with $N=3$.

By applying the usual construction with $N=3$, we consider $W_0$ to be the VOA based on the lattice $\sqrt{6}\ZZ$, and extend $V_{0,0,0}:=W_0\otimes V_{0,0}$ by the simple current $V_{2,1,-1}=W_1\otimes V_{1,-1}$ of weight $1$, to get
\be \tilde V=V_{0,0,0}\oplus V_{2,1,-1}\oplus V_{-2,-1,1}\ .
\ee The VOA $\tilde V$ contains the $8+78$ currents of $\alg{su}(3)_1\alg{e}_{6,1}$ in $V_{0,0}$, $1$ current in $W_0$, and $27+27$ currents in  $V_{2,1,-1}\oplus V_{-2,-1,1}$ in the $(1,27)\oplus (1,\bar{27})$ representation of $\alg{su}(3)_1\alg{e}_{6,1}$. Altogether, the $8+133$ currents generate a $\alg{su}(3)_1\alg{e}_{7,1}$ algebra. The $\tilde V$-modules are given by
\be \tilde V_l=V_{-l,0,l}\oplus V_{-l+2,1,l-1}\oplus V_{-l-2,-1,l+1}\ .
\ee
 The $E_7$ group of inner automorphisms contains an involution that flips the sign of the current in $W_0$ and restricts to an automorphism of $\alg{e}_{6,1}$. For example, one can take the lift to $E_7$ of the longest element in the Weyl group, that acts by minus the identity on the Cartan algebra, and has order $2$ in its adjoint action on $\alg{e}_{7,1}$ currents (the square is the generator of the $\ZZ_2$ centre of $E_7$). This involution exchanges the $V_{0,n}$ sectors with $V_{n,0}$, so that the $E_8$ VOA is self-orbifold with respect to $g$, and the $E_7$ involution provides the corresponding duality defect.

A variant in this construction is to obtain new currents from the twisted sector states of conformal weight $1/3$, i.e. $1-r/N$ with $r=2$ and $N=3$. To this aim, we take $W_0$ to be the lattice VOA for $\sqrt{2rN}\ZZ=\sqrt{12}\ZZ$, define as usual $V_{0,0,0}:=W_0\otimes V_{0,0}$, and extend by the simple current $V_{2r,1,-r}=W_{4}\otimes V_{1,1}$ of conformal weight $1$ to get
\be \tilde V=V_{0,0,0}\oplus V_{4,1,1}\oplus V_{-4,-1,-1}\ .
\ee The components $V_{2,1,-1}\oplus V_{-2,-1,1}$ contain $3+3$ currents in the $(3,1)\oplus (\bar 3,1)$ of $\alg{su}(3)_1\alg{e}_{6,1}$, that extend the current algebra of $V_{0,0,0}$ to $\alg{su}(4)_1\alg{e}_{6,1}$. The $\tilde V$-modules are given by 
\be \tilde V_l=\bigoplus_{n\in \ZZ/N\ZZ} V_{2rn-l,n,l-2rn}=V_{-l,0,l}\oplus V_{-l+4,1,l+1}\oplus V_{-l-4,-1,l-1}\ ,\qquad l\in \ZZ/12\ZZ\ .
\ee Notice that, contrary to the cases considered in theorem \ref{th:paraf} and in the previous examples of this section, the set of $W_0$ charges contained in the modules $\tilde V_l$ is not symmetric with respect the origin. There is no inconsistency here: contrary to the groups $SU(2)$, $SO(8)$, and $E_7$ considered in the previous cases, the $SU(4)$ group of inner automorphisms of $\tilde V$ does \emph{not} contain any involution that changes the signs of its Cartan generators -- this is simply the statement that $SU(4)$ admits representations that are not self-conjugate. This is the reason why the argument in the proof of theorem \ref{th:paraf} fails in this case, and the conformal weight $h(v)$ is not equal to $1-1/N$. Of course, there is still an \emph{outer} charge conjugation automorphism of $SU(4)$ that exchanges the modules in a suitable way, but exactly because it is outer, it is not particularly useful to understand the self-orbifold properties of the model, or get an nice formula for the duality defect.

Not everything is lost, though. We have just realized that the embedding $\alg{su}(3)_1$ in $\alg{su}(4)_1$ is not really useful, but this is not the only possibility: going one step further in the chain of embeddings \be \alg{su}(3)_1\subset \alg{su}(4)_1\subset \alg{so}(7)_1\ ,\ee we find the $SO(7)$ group that has the same rank $3$ as $SU(4)$, but contains a suitable Cartan involution. The coset $\frac{\alg{so}(7)_1}{\alg{su}(4)_1}=\paraf(2)=Vir_{c=1/2}$ is just the Virasoro algebra at $c=1/2$, with irreducible modules $M_h$, $h\in \{0,\frac{1}{2},\frac{1}{16}\}$. This suggests how to proceed: consider the tensor product VOA $V'_{0,0,0}=Vir_{c=1/2}\otimes W_0\otimes V_{0,0}$, where again $W_0=V_{\sqrt{12}\ZZ}$, and extend it by the order $6$ simple current $V'_{2,1,1}=M_{\frac{1}{2}}\otimes W_2\otimes V_{-1,-1}$ of conformal weight $\frac{1}{2}+\frac{4}{2\cdot 12}+\frac{1}{3}=1$, to  get
\be \tilde V'=\bigoplus_{n\in \ZZ/6\ZZ} V'_{2n,n,n}=(M_{0}\otimes\tilde V) \oplus (M_{\frac{1}{2}}\otimes \tilde V_{6})\ .
\ee This shows that $\tilde V'$ contains $\tilde V$ as a subVOA, and therefore the $\alg{su}(4)_1$ affine algebra. However, it also contains $6$ more currents in the $V'_{2,1,1}$ and $V'_{-2,-1,-1}$ components, extending $\alg{su}(4)_1$ to $\alg{so}(7)_1$. The standard simple current construction implies that the modules of $\tilde V'$ are either of the form
\be \tilde V'_{t}=\bigoplus_{n\in \ZZ/6\ZZ} M_{\frac{n\bmod 2}{2}}\otimes W_{t+2n}\otimes V_{-n,-t-n}\ ,\qquad t\in \ZZ/12\ZZ,\ t\text{ even}
\ee or
\be
\tilde V'_{t}=\bigoplus_{n\in \ZZ/6\ZZ} M_{\frac{1}{16}}\otimes W_{t+2n}\otimes V_{-n,-t-n}\ ,\qquad t\in \ZZ/6\ZZ,\ t\text{ odd}\ .
\ee 
 It is now easy to check, using the same methods as in the proof of theorem \ref{th:paraf}, that the Cartan involution of $SO(7)$ maps each $\tilde V'_t$ module to itself, while flipping the sign of the $\alg{u}(1)$-current in $W_0$, and therefore it exchanges the $n$ and the $-t-n$ components in $\tilde V'_t$. This means that the induced action on $V_{0,0}$ is an outer automorphism exchanging each module $V_{a,b}$ with $V_{-b,-a}$, thus proving that the theory is self-orbifold.

Of course, in this particular example, it is probably easier to derive the same results using the more standard methods described in \cite{Burbano:2021loy} -- this is because the $g$-fixed VOA $V^g$ is very well understood, and we know already that there is an outer automorphism of $V^g$ exchanging the twisted sectors in the way expected for a self-orbifold theory.  Nevertheless, this construction illustrates how some variants of the techniques of section \ref{s:mainth} can be applied in much more general contexts.

\subsection{Far-from-the-hole orbifolds from the Leech VOA to itself}

Theorem \ref{th:paraf} applies when the $g$-twisted ground state $v$ is neutral with respect to the $g$-invariant currents in $V$ and has conformal weight $1-1/N$. In the proof of the theorem, one takes the tensor product with a state $\CV_{\sqrt{\frac{2}{N}}}$ with conformal weight $1/N$ and $\alg{u}(1)$ charge $\sqrt{\frac{2}{N}}$, so as to get a $\alg{su}(2)_N$ current. Vice versa, as we will now explain, when  the $g$-twisted ground state $v$ is charged with respect to some $\alg{u}(1)$ current, and has conformal weight $1/N$, one can get a $\alg{su}(2)_N$ by applying the construction in opposite direction, i.e. by taking the tensor product with a $\paraf(N)$ parafermion. This construction yields the following theorem.

\begin{theorem}\label{th:invparaf}
	Let $V$ be a strongly rational holomorphic VOA, $g\in \Aut(V)$ be a non-anomalous automorphism of finite order $N>2$. Let $\alg{h}=\bigoplus_i \alg{g}_{i}\oplus \alg{u}(1)^r$ be the current algebra generated by the weight $1$ fields in the $g$-fixed subVOA $V^g$, where $\alg{g_i}$ are (non-abelian) simple affine algebras. If the $g$-twisted sector $V_g$ contains a non-zero vector $v\in V_g$ such that:
	\begin{enumerate}
		\item the conformal weight is $h(v)=\frac{1}{N}$,
		 and
		\item $v$ has charge $Q(v)=\sqrt{\frac{2}{N}}$ with respect to a $\alg{u}(1)$ current $j(z)$ in the abelian component $\alg{u}(1)^r$ of $\alg{h}$ and is in the vacuum representation with respect to the non-abelian components $\alg{g_i}$,
	\end{enumerate} then, $V$ is self-orbifold with respect to $g$, i.e. the orbifold VOA $V/\langle g\rangle$ is isomorphic to $V$. 
\end{theorem}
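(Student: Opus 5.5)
The idea is to run the proof of Theorem \ref{th:paraf} in reverse: we adjoin a parafermion algebra instead of a free boson. Concretely, I would tensor $V^g$ with $\paraf(N)$ and look for a simple current extension that contains an $\alg{su}(2)_N$ affine algebra, whose $\alg{u}(1)$ Cartan is the current $j(z)$ already present in $V^g$. The zero modes of this $\alg{su}(2)_N$ give an $SO(3)$ of inner automorphisms, and the Weyl involution in it will then provide the isomorphism $V\cong V/\langle g\rangle$.

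\textbf{Step 1 (locating $v$).} Since $g$ is non-anomalous and $h(v)=1/N$, the same bookkeeping as in the proof of Theorem \ref{th:paraf} places $v$ in $V_{1,1}$, because conformal weights in $V_{a,b}$ lie in $ab/N+\ZZ$. The dual vector $v'$ then lies in $V_{-1,-1}$ and has charge $-\sqrt{2/N}$ under $j$.

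\textbf{Step 2 (the extension).} Recall that $\paraf(N,[0,m])$, for $m$ even, are simple currents of weight $\frac{m}{2}-\frac{m^2}{4N}$ modulo $\ZZ$. In particular, $\paraf(N,[0,-2])\cong\paraf(N,[N,N-2])$ has a ground state of weight $1-\frac1N$. Hence the module
\begin{equation*}
J:=\paraf(N,[0,-2])\otimes V_{1,1}
\end{equation*}
of $\paraf(N)\otimes V^g$ is a simple current of order $N$ with integral weight $1-\frac1N+\frac1N=1$. Its ground state $\psi\otimes v$ is a weight-one field of $j$-charge $\sqrt{2/N}$. I would form the extension
\begin{equation*}
\tilde V=\bigoplus_{n\in\ZZ/N\ZZ}\paraf(N,[0,-2n])\otimes V_{n,n}.
\end{equation*}
Two checks are needed here. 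First, $J$ has integral weight. Second, its self-braiding is trivial; this follows from the quadratic form $ab/N+\frac{m}{2}-\frac{m^2}{4N}$ on $(n,n,-2n)$, which gives $n^2/N+n-n^2/N\in\ZZ$.

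\textbf{Step 3 (the current algebra).} The OPE $(\psi\otimes v)(\psi'\otimes v')$ lands in the vacuum component. Condition 2 ensures that the non-abelian currents $\alg g_i$ commute with $j^\pm$. As in Theorem \ref{th:simplecurrent}, the fields $j^\pm$ and the rescaled current $\sqrt{N}j/\sqrt{2}\cdot$(normalisation) should close onto $\alg{su}(2)_N$. The root length comes out right because $\sqrt{N}\cdot\sqrt{2/N}=\sqrt2$, and the level is fixed by the $1/z^2$ coefficient. Here the hypothesis $N>2$ matters: for $N=2$ the parafermion is an Ising fermion and the counting differs, with possible ambiguity from the field identification. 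I would also have to check that $j$ is orthogonal to the other $\alg{u}(1)$'s, so that the remaining currents decouple.

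\textbf{Step 4 (the self-orbifold isomorphism).} I would compute the $\tilde V$-modules as simple current orbits of modules $\paraf(N,[l,m])\otimes V_{a,b}$ that are local with respect to $J$. Following the argument in Theorem \ref{th:paraf}, an inner Weyl involution $f\in SO(3)$ acts on the commutant of $\alg{u}(1)_j$ by charge conjugation on the $\alg{u}(1)_j$ part. The main obstacle is here. Unlike before, $f$ does not factor as an automorphism of $\paraf(N)$ times one of $V^g$. Instead, it preserves $\Com(\alg{su}(2)_N,\tilde V)$ and mixes $\paraf(N)$ with the parafermion $\Com(\alg{u}(1)_j,\alg{su}(2)_N)$ sitting inside $V^g$.

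What I would try first is to restrict $f$ to $\Com(\paraf(N),\tilde V)=V^g$, which it need not preserve. So instead I would work with the zero-$j$-charge, $\paraf(N)$-vacuum sector and track how $f$ permutes the $\tilde V$-modules. Each module $\tilde V_t$ must be mapped to itself while $j$-charges are reversed, and this forces $V_{a,b}\leftrightarrow V_{b,a}$, or $V_{-b,-a}$, exactly as in the proof of Theorem \ref{th:paraf}. That yields an automorphism of $V^g$ exchanging $\bigoplus_b V_{0,b}$ with $\bigoplus_a V_{a,0}$, hence $V\cong V/\langle g\rangle$.
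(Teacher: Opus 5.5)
There is a genuine gap in Step 4, which is where the content of the theorem lies. You assert that the Weyl involution $f$ does not factor, that it ``mixes $\paraf(N)$ with the parafermion $\Com(\alg{u}(1)_j,\alg{su}(2)_N)$ sitting inside $V^g$'', and that it need not preserve $V^g$. You then conclude anyway that tracking modules ``yields an automorphism of $V^g$''. That conclusion has no support. If $f(V^g)\neq V^g$, there is no induced map on $V^g$. There is also no well-defined permutation of the $V_{a,b}$, because $f$ need not send a $\paraf(N)\otimes V^g$-isotypic component of a $\tilde V$-module to another such component. Reversing $j$-charges alone does not fix this. In Theorem~\ref{th:paraf} the factorization came for free from $V^g\cong\Com(\alg{u}(1),\tilde V)$ and $f(j)=-j$. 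Here $j\in V^g$, so $V^g$ is not that commutant, and you need a different characterization of $V^g$ inside $\tilde V$ that $f$ respects.

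Your premise is also wrong. Since $h(v)=Q(v)^2/2$, unitarity forces two things. First, $v$ is neutral under the currents orthogonal to $j$. Second, $v$ is a vacuum vector for $T_\calC:=T_V-T_{\alg{u}(1)}$, where $T_{\alg{u}(1)}=\frac12\nord{jj}$. So $v$ is a pure $\alg{u}(1)_j$ vertex operator. Consequently $\Com(\alg{u}(1)_j,\alg{su}(2)_N)$ is the adjoined $\paraf(N)$ itself, not something inside $V^g$.

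The missing idea, which is the paper's argument, runs as follows:
\begin{enumerate}
\item $T_\calC$ has non-singular OPE with $j$ and $j^\pm$. It is therefore fixed by the inner $SU(2)$.
\item $T_{\alg{u}(1)}$ is fixed by $j\mapsto -j$. Hence $f(T_V)=T_V$.
\item So $f$ preserves $\Com(T_V,\tilde V)=\paraf(N)$ and $\Com(\paraf(N),\tilde V)=V^g$. Both identities use only that the non-vacuum summands $V_{n,n}$ and $\paraf(N,[0,2n])$ have positive weight.
\item Hence $f=f_1\otimes f_2$ on $\paraf(N)\otimes V^g$, with $f_1$ a charge conjugation because $f$ swaps $j^+$ and $j^-$.
\item Only now does ``$f$ is inner, so it maps each $\tilde V$-module to itself'' force $f_2$ to send $V_{a,b}$ to $V_{-b,-a}$. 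This gives $V\cong V/\langle g\rangle$.
\end{enumerate}

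Your Steps 1--3 are essentially fine. Using $[0,-2]$ instead of $[0,2]$ is just a relabeling by charge conjugation of $\paraf(N)$. In Step 3, though, the fact you need is not that $j$ is orthogonal to the other $\alg{u}(1)$'s. It is that $v$ is neutral under them, and this again follows from $h(v)=Q(v)^2/2$.
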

Here, $Q(v)$ is the $j_0$-eigenvalue of $v$, where $j(z)$ has the standard normalization $j(z)j(0)\sim 1/z^2$ for $\alg{u}(1)$ currents.
\begin{proof}
	Consider the tensor product of $V^g$ with the $\ZZ_N$ parafermion algebra
	\be \paraf(N)\otimes V^g \ .
	\ee Its module $\paraf(N,[0,2])\otimes V_{1,1}$ is a simple current of order $N$ for this VOA with conformal weight $1-\frac{1}{N}+\frac{1}{N}=1$. Then, we can consider the simple current extension
	\be \tilde V=\bigoplus_{n\in \ZZ/N\ZZ} \paraf(N,[0,2n])\otimes V_{n,n}
	 \ ,
	\ee with modules
	\be \tilde V_{\ell,t}=\bigoplus_{n\in \ZZ/N\ZZ} \paraf(N,[\ell,t+2n])\otimes V_{t+n,n}=\bigoplus_{\substack{m\in \ZZ/2N\ZZ\\ m\equiv t\mod 2}} \paraf(N,[\ell,m])\otimes V_{\frac{m+t}{2},\frac{m-t}{2}}
	\ .
	\ee The VOA $\tilde V$ contains two weight $1$ fields $j^\pm(z)$ in the $\paraf(N,[0,\pm 2])\otimes V_{\pm 1,\pm 1}$ components, that have charge $\pm\sqrt{\frac{2}{N}}$ with respect to $j(z)$ and commute with all the other currents in $V^g$. Therefore, $j^+(z)$, $j^-(z)$ and $j(z)$ generate a $\alg{su}(2)_N$ affine algebra, and $\tilde V$ admits a $SU(2)$ group of inner automorphisms. Let $T_{\alg{u}(1)}(z)=\frac{1}{2}\normord{j(z)j(z)}$, $T_V(z)$,  $T_{\tilde V}(z)$ and $T_{\paraf(N)}(z)$ denote the stress-tensors of the respective VOAs. Then, $T_\calC(z):=T_V(z)-T_{\alg{u}(1)}(z)$ has non-singular OPE\footnote{ Notice that, to prove this, we use that the charge and the conformal weight of $v$ are related by $h(v)=\frac{Q(v)^2}{2}$.} with all currents $j(z)$, $j^+(z)$, and $j^-(z)$ of $\alg{su}(2)_N$, and therefore it is invariant under the $SU(2)$ group of automorphisms. Let $f\in SU(2)$ be an automorphism that flips the sign of $j(z)$ and exchanges $j^+$ and $j^-$. Then, $f$ leaves both summands in the sum $T_V=T_{\alg{u}(1)}+T_\calC$ invariant. As a consequence, $f$  maps the commutant $\Com(T_V(z),\tilde V)\cong \paraf(N)$ of $T_V$ in $\tilde V$ to itself. Similarly, $f$ maps the commutant of $\paraf(N)$, i.e. $\Com( \paraf(N),\tilde V)\cong V^g$ to itself. It follows that $f$ restricts to an automorphism of the form $f_1\otimes f_2$ on the subVOA $\paraf(N)\otimes V^g\subset \tilde V$. Because $f$ exchanges $j^+$ in $\paraf(N,[0,+2])\otimes V_{+ 1,+ 1}$ with  $j^-$ in $\paraf(N,[0,-2])\otimes V_{- 1,- 1}$, $f_1$ must be a charge conjugation automorphism of $\paraf(N)$, that exchanges each module $\paraf(N,[l,m])$ with $\paraf(N,[l,-m])$.\\
Because $f$ is an inner automorphism of $\tilde V$, it must map each module $\tilde V_{\ell,t}$ to itself. Furthermore, since it restricts to an automorphism $f_1\otimes f_2$ of $\paraf(N)\otimes V^g\subset \tilde V$ that acts by charge conjugation on $\paraf(N)$, it must map each component $\paraf(N,[\ell,m])\otimes V_{\frac{m+t}{2},\frac{m-t}{2}}$ to  $\paraf(N,[\ell,-m])\otimes V_{\frac{-m+t}{2},\frac{-m-t}{2}}$. Therefore, $f_2$ must be an outer automorphism of $V^g$ that exchanges each module $V_{a,b}$ with $V_{-b,-a}$. In particular, $f_2$ induces an isomorphism from the original VOA $V=\oplus_{b\in \ZZ/N\ZZ} V_{0,b}$ to its orbifold  $V/\langle g\rangle=\oplus_{b\in \ZZ/N\ZZ} V_{-b,0}$, and we conclude.  \end{proof}

For example, let us consider again the Leech lattice VOA $V_\Lambda$, let $\lambda\in \Lambda$ be a primitive vector with square norm $\lambda^2=2N$, and consider the inner automorphism $h\equiv h_{\frac{\lambda}{N}}\in U(1)^{24}$ acting on a state $\CV_\mu$ with winding-momentum $\mu\in \Lambda$ by
\be h_{\frac{\lambda}{N}}(\CV_\mu)=e^{2\pi i \frac{\lambda\cdot\mu}{N}}\CV_\mu\ .
\ee Then, $h$ has order $N$, and the $h$-twisted sector is $V_{\frac{\lambda}{N}+\Lambda}$, i.e. it includes states with winding momentum in the translate of $\Lambda$ by $\lambda/N$. In particular, there is a single state in $V_{\frac{\lambda}{N}+\Lambda}$ with winding-momentum $\frac{\lambda}{N}$. Its conformal weight is $\frac{1}{2}\frac{\lambda^2}{N^2}=\frac{1}{N}$, so that this is a ground state of the $h$-twisted sector and the symmetry $h$ is non-anomalous. Therefore, theorem \ref{th:invparaf} applies, and $V/\langle h\rangle$ is isomorphic with $V$. 

In general, for a symmetry $h_x\in U(1)^{24}\subset \Aut(V_\Lambda)$ of finite order $N$, with $x\in (\Lambda\otimes\QQ)/\Lambda$, the $h$-twisted sector is $V_{x+\Lambda}$ and its conformal weight is half the square distance between $x$ and the closest lattice vector. The maximal possible square distance of a point $x\in \Lambda\otimes\RR$ from the Leech lattice is $2$, and in this case $x$ is called a `deep hole' -- this corresponds to the $h_x$-twisted sector having the maximal conformal weight $1$. There are $23$ inequivalent deep holes, up to the action by the affine automorphism group $\Lambda\rtimes Co_0$, and the orbifolds of $V_\Lambda$ by such $h_x$ gives rise to the $23$ Niemeier lattice VOAs \cite{MollerScheithauer2023}. The case where theorem \ref{th:invparaf} applies is, in a sense, the opposite extreme, where the conformal weight of the $h_x$-twisted sector $V_{x+\Lambda}$ is as small as possible for a non-anomalous symmetry of order $N$. In these cases, $V_\Lambda$ is always self-orbifold. It would be interesting to derive some general results for the intermediate cases, where the conformal weight is $1-\frac{r}{N}$ with $1<r<N$.

\bigskip

{\bf Acknowledgments.} 
I would like to thank Terry Gannon for many interesting conversations about vertex operator algebras, and Roberta Angius, Sarah Harrison, Alessandro Miccich\`e for useful conversations on related topics. The author would like to thank the Isaac Newton Institute for Mathematical Sciences, Cambridge, for support and hospitality during the programme New connections between physics and number theory, where work on this paper was undertaken. I acknowledge support from CARIPARO Foundation Grant under
grant n. 68079.

\appendix

\section{Vertex operator algebras, modules, and characters}\label{a:VOA}

The chiral algebra of holomorphic fields in a  conformal field theory on the Riemann sphere can be described formally in terms of a vertex operator algebra. In this section, we briefly summarise the basic definitions and our notation; we leave all the details to the references.

A VOA consists of a tuple $(V,Y,1,\omega)$ (denoted  by $V$ for short), where:
\begin{itemize}
	\item $V$ is a $\CC$-vector space, corresponding to the subspace of states in the CFT with $\bar L_0=0$, i.e. that are mapped to holomorphic operators by the state-operator correspondence. It admits a graded decomposition as
	\be V=\oplus_{n\in \ZZ} V_n\ .
	\ee where $\dim V_n<\infty$ for all $n\in \ZZ$, and $V_n=0$ for sufficiently small $n$ ($V_n$ is just the $L_0$-eigenspace with eigenvalue $n$, see below). In this work, we only consider VOAs of CFT type, i.e. with $V_n=0$ for $n<0$, and $V_0\cong \CC$.  They arise as chiral algebras in unitary, compact CFTs with a unique vacuum.
	\item $Y:V\to \End(V)[[z^{\pm 1}]]$ maps every state $v\in V$ to the associated holomorphic vertex operator $Y(v,z)=\sum_{n\in \ZZ} v(n) z^{-n-1}$, seen as a formal power series in $z$ with the coefficients (modes) $v(n):V\to V$ being $\CC$-linear endomorphisms in $V$. Note that the math conventions for the mode grading is different from the physics conventions $Y(v,z)=\sum_{n\in \ZZ} v_n z^{-n-h_v}$, where $h_v$ is the conformal weight ($L_0$-eigenvalue) of $v$.
	\item $1 \in V_0\subset V$ is the non-zero vacuum state, corresponding to the identity vertex operator $Y(1,z)=1$.
	\item $\omega\in V_2\subset V$ is the conformal vector, corresponding to the holomorphic  stress-tensor $T(z):=Y(\omega,z)=\sum_{n\in \ZZ} L_n z^{-n-2}$, where the modes $L_n$ obey the Virasoro algebra for some central charge $c$. One has $L_0v=nv$ for all $v\in V_n$.
\end{itemize}  There are a number of axioms that should be satisfied, that translate the properties expected for the OPEs between holomorphic operators in a CFT, and their covariance with respect to conformal transformations generated by the $L_n$. See \cite{FLM1988} for details.

There are natural notions of homomorphisms, isomorphism, and embeddings between VOAs, and of subVOA $W\subset V$. In the latter case, a subVOA $W$ has the same vacuum as $V$, $1_W=1_V$, but we do not require the conformal vectors $\omega_V$ and $\omega_W$ to coincide.

In a generic 2-dim CFT, the chiral and anti-chiral algebras correspond to a holomorphic and an anti-holomorphic VOAs (not necessarily isomorphic to each other). The Hilbert space decomposes into subspaces associated with a primary state and all its descendants, and can be described in terms of VOA modules (we will also use the word `VOA representations'). We always work with ordinary modules $(M, Y_M)$, where $M$ is a vector space with a decomposition
\be M=\oplus_{\lambda} M_{\lambda}\ ,
\ee into finite dimensional $L_0$-eigenspaces $M_{\lambda}$ and $Y_M:V\to \End(M)[[z^{\pm 1}]]$ maps every VOA vector $v\in V$ to the vertex operator $Y_M(v,z)=\sum_{n\in \ZZ} v_M(n) z^{-n-1}$, where the modes $v_M(n)$ are now linear maps from $M$ to itself. Such vertex operators must obey a number of compatibility axioms, see \cite{FLM1988}. A $V$-module is simple (or irreducible) if it contains no proper submodule. The VOA $V$ is also a module over itself (called the adjoint or vacuum module), and in our examples we always assume that it is simple. 

Given a $V$-module $M$, there is always a contragredient (or dual, or charge conjugate) module, with underlying vector space $M^*=\oplus_{\lambda} M^*_{\lambda}$, where $M^*_{\lambda}=\Hom(M_\lambda,\CC)$ is the dual vector space. A VOA $V$ is self-contragredient if $V\cong V^*$ as a $V$-module. This is always true for the full chiral algebra of a unitary CFT.

In this work, we will only consider rational VOAs, where every ordinary module decomposes as a finite sum of simple modules, and there are only finitely many irreducible modules. A VOA is \emph{strongly rational} if it is rational, simple, self-contragredient, $C_2$-cofinite (i.e. the space $C_2=\{u(-2)v\mid u,v\in V\}$ has finite codimension in $V$, see \cite{Zhu1996ModularInvariance}), and of CFT-type. Ordinary modules of a strongly rational VOA $V$ are objects in a modular tensor category \cite{Huang2008Rigidity}, that we denote by $\mathrm{Rep}(V)$. This category encodes all the usual notions of direct sum, fusion, braiding and the modular data, and in particular the $T$ and $S$ matrices determining the modular transformations of the characters
\be \Tr_{M}(q^{L_0-\frac{c}{24}})\ ,\qquad q=e^{2\pi i\tau}\ .
\ee If $V$ is strongly rational, and $M_1$, $M_2$ are $V$-modules, then we denote by
\be M_1\boxtimes M_2\in \mathrm{Rep}(V)
\ee their fusion product, which is still a $V$-module.

If $V_1$ and $V_2$ are VOAs, and $M_1$ and $M_2$ are, respectively, a $V_1$- and a $V_2$-module, then the tensor product $M_1\otimes M_2$ is a module for the tensor product VOA $V_1\otimes V_2$ (this should not be confused with the fusion product of two modules for the same VOA $V$, considered above).

A VOA is called holomorphic if it is simple, rational, self-contragredient, and the only simple module is the adjoint module, up to isomorphisms.

Let us describe the main examples of VOAs algebras that appear in this work.
\begin{itemize}
	\item Given an even positive definite lattice $L\subset \RR^d$, the lattice VOA $V_L$ is generated by $d$ chiral free bosons $\phi^a$, with winding-momentum lattice $L$. More precisely, $V_L$ contains $d$ commuting  $\alg{u}(1)$ currents $j^a(z)=i\partial \phi^a(z)$, that we usually normalize so that
	\be j^a(z)j^b(0)= \frac{\delta^{ab}}{z^2}+\ldots\ ,
	\ee
	as well as the vertex operators $\mathcal{V}_\lambda(z)\sim  \nord{e^{i\lambda\cdot \phi}}(z)$ of conformal weight $\frac{\lambda^2}{2}$ for all $\lambda\in L$. This VOA is rational, and its irreducible modules are labeled by the cosets in $L^*/L$, where 
	\be L^*=\{\mu\in \RR^d\mid \mu\cdot\lambda\in \ZZ,\ \forall \lambda\in L \}
	\ee is the dual lattice. The characters of the modules $V_{\gamma+L}$, $\gamma\in L^*$, are given by
	\be \Tr_{V_{\gamma+L}}(q^{L_0-\frac{c}{24}})=\frac{\Theta_{\gamma+L}(\tau)}{\eta(\tau)^d}\ ,
	\ee where
	\be \Theta_{\gamma+L}(\tau)=\sum_{\lambda\in \gamma+L} q^{\frac{\lambda^2}{2}}\ ,
	\ee is the theta series of the coset $\gamma+L$. The VOA $V_L$ is holomorphic if and only if the lattice $L$ is unimodular.\\
	For a $1$-dimensional even lattice $L=\sqrt{2k}\ZZ$, we also use the notation $\alg{u}(1)_k\equiv V_{\sqrt{2k}\ZZ}$. Its modules $V_{\frac{m}{\sqrt{2k}}+\sqrt{2k}\ZZ}$ have conformal weight $\frac{m^2}{4k}$, if the representative of $m\in \ZZ/2k\ZZ$ is chosen in the range $\{-k+1,\ldots,k\}$. The (flavoured) characters are
	\be \Tr_{V_{\frac{m}{\sqrt{2k}}+\sqrt{2k}\ZZ}}(q^{L_0-\frac{c}{24}}e^{2\pi i z j_0})=\frac{\Theta^{(k)}_m(\tau,z)}{\eta(\tau)} \ ,
	\ee where
	\be \Theta_m^{(k)}(\tau,z)=\sum_{n\in \ZZ} q^{\frac{1}{4k}(m+2kn)^2}e^{2\pi i z\frac{m+2kn}{2}}\ ,\qquad \eta(\tau)=q^{\frac{1}{24}}\prod_{n=1}^\infty (1-q^n)\ ,
		\ee are the lattice theta series and the Dedekind eta function, respectively. 
	\item Given a simple Lie algebra $g$ and a positive integral level $k\in \ZZ_{> 0}$, the currents of the affine Kac-Moody algebra $\alg{g}_k$ generate a unitary VOA that, by standard abuse of notation, we still denote by $\alg{g}_k$. Its irreducible modules are denoted by $L_\alg{g}(k,\lambda)$, where $\lambda$ is the highest weight of an integrable representation; in particular $\alg{g}_k=L_\alg{g}(k,0)$. More generally, we use the notation $\alg{g}_{1,k_1}\alg{g}_{2,k_2}\cdots \alg{g}_{r,k_r}=L_{\alg{g}_1}(k_1,0)\otimes\ldots\otimes L_{\alg{g}_r}(k_r,0)$. For $\alg{su}(2)_k$, the central charge is
	\be c=\frac{3k}{k+2}\ .
	\ee
	 The modules $L_{\alg{su}(2)}(k,\ell)$ are labeled by $\ell\in \{0,\ldots k\}$, have conformal weight $\frac{\ell(\ell+2)}{4(k+2)}$, and the characters are given by 
	\be\label{su2Nchar} \ch_\ell^{\alg{su}(2)_k}(\tau,z):=\Tr_{L_{\alg{su}(2)}(k,\ell)}(q^{L_0-\frac{c}{24}}e^{2\pi i z j^3_0})=\frac{\Theta^{(k+2)}_{\ell+1}(\tau,z)-\Theta^{(k+2)}_{-\ell-1}(\tau,z)}{\Theta^{(2)}_{1}(\tau,z)-\Theta^{(2)}_{-1}(\tau,z)}\ .\ee Notice that for $\ell,\ell'\in\{0,\ldots, N-1\}$, one has
	\be\label{invertsu2N} \left[\ch_{\ell'}^{\alg{su}(2)_k}(\tau,z)\left( \Theta^{(2)}_{1}(\tau,z)-\Theta^{(2)}_{-1}(\tau,z)\right)\right]_{y^{\ell+1}}=\left[\Theta^{(k+2)}_{\ell'+1}(\tau,z)-\Theta^{(k+2)}_{-\ell'-1}(\tau,z)\right]_{y^{\ell+1}}=q^{\frac{(\ell+1)^2}{4k+8}}\delta_{\ell,\ell'}\ ,
	\ee where $\left[\ldots\right]_{y^{\ell+1}}$ means that we pick the coefficient of the $y^{\ell+1}:=e^{2\pi i z(\ell+1)}$ term.
	\item Given a rational VOA $V$ and a rational subVOA $W\subset V$, the \emph{coset} or \emph{commutant} of $W$ in $V$ is the vertex algebra with vector space
	\be \Com(W,V)=\{v\in V\mid [Y(v,z),Y(w,\zeta)]=0\ \forall w\in W\}\ ,\qquad W\subset V\ ,
	\ee i.e. it is generated by the vertex operators in $V$ that have non-singular OPE with all vertex operators in $W$. While it is not always true that the difference $T_V-T_W$ is contained in $\Com(W,V)$, this will hold in all situations considered in this work (in particular, it always true when $W$ is an affine algebra), so that  $\Com(W,V)$ is a vertex operator algebra with $T_V-T_W$ as a stress tensor. When $V=\alg{g}_k$ and $W=\alg{h}_p$ are both affine algebras, we will use the standard coset notation
	\be \Com(\alg{h}_p,\alg{g}_k)=\frac{\alg{g}_k}{\alg{h}_p}\ .
	\ee
	\item The $\ZZ_k$ parafermion algebra $\paraf(k)$, $k\ge 2$, is the rational VOA obtained as the coset $\paraf(k)\cong \frac{\alg{su}(2)_k}{\alg{u}(1)_k}$, i.e. the commutant $\Com(\alg{u}(1),\alg{su}(2)_k)$ of $\alg{u}(1)_k$ in $\alg{su}(2)_k$. It has central charge
	\be c_{\paraf(k)}=c_{\alg{su}(2)_k}-c_{\alg{u}(1)_k}=\frac{3k}{k+2}-1=\frac{2k-2}{k+2}\ .
	\ee
%
	The modules $\paraf(k,[l,m])$ of $\paraf(k)\cong \paraf(k,[0,0])$ are labeled by pairs $[\ell,m]$, $\ell\in \{0,\ldots,k\}$, $m\in \ZZ/2k\ZZ$, with the condition
	\be \ell\equiv m\mod 2\ ,
	\ee and the field identification
	\be [\ell,m]\equiv [k-l,m\pm k]\ .
	\ee Using this field identification, for every module $\paraf(k,[l,m])=\paraf(k,[k-l,m\pm k])$ one can choose $l$ and $m$ in the sets $l\in \{0,\ldots,k\}$ and $m\in \ZZ$, $-l+1\le m\le l$. With this particular choice, the conformal weight of $\paraf(k,[l,m])$ is given by
	\be h_{l,m}=\frac{l(l+2)}{4(k+2)}-\frac{m^2}{4k}\ .
	\ee
	The characters $\chi^{\paraf(k)}_{[\ell,m]}(\tau)$ of $\paraf(k,[l,m])$ are given by the branching functions
	\be \ch_\ell^{\alg{su}(2)_k}(\tau,z)=\sum_{\substack{m\in \ZZ/2k\ZZ\\ m\equiv \ell\mod 2}} \chi_{[\ell,m]}^{\paraf(k)}(\tau)\frac{\Theta^{(k)}_m(\tau,z)}{\eta(\tau)}\ .
	\ee Explicit formulae for the characters $\chi_{[\ell,m]}^{\paraf(k)}(\tau)$ are given in \cite{Fortin:2006dn}, see also \cite{Haghighat:2023sax}.
\end{itemize}

%

\section{Topological defects}

In this section we introduce some basic notions about topological defects in 2-dimensional CFT; we mostly follow the treatment in \cite{Chang_2019}.

Let $\calC$ denote a unitary (not necessarily holomorphic) conformal field theory on a Euclidean $2$-dimensional space time (worldsheet) with a unique vacuum and compact (i.e. $L_0$ and $\bar L_0$ have discrete spectrum). In this work we mostly consider bosonic CFTs, but generalizations exist for fermionic ones. The only worldsheets we are interested in are the Riemann sphere (possibly without some punctures), the infinite cylinder $S^1\times \RR$, or the torus $S^1\times S^1$.

Besides the usual local operators supported on a point, one can consider correlation functions with the insertions of defects supported on oriented lines. The lines can be either closed or open; in the latter case, one needs to specify defect starting or ending operators at the boundary points. A defect line $\CL$ is called topological if all correlation functions are invariant under continuous deformations of the line supporting $\CL$, as long as one does not crosses the support of some other insertion. We say that a certain local operator $\phi(z,\bar z)$ is preserved by the line defect $\CL$ if every correlation function is invariant when the defect $\CL$ is moved across the insertion of $\phi(z,\bar z)$. One can show that a defect line $\CL$ is topological if and only if both the holomorphic and anti-holomorphic stress-energy tensors $T(z)$ and $\tilde T(\bar z)$ are  preserved by $\CL$. In general, the vector space of operators preserved by a given defect $\CL$ is closed with respect to OPE on the sphere, and in particular the preserved (anti-)holomorphic fields generate a subVOA of the (anti-)chiral algebra of the CFT.

Consider the CFT on a cylinder $S^1\times \RR$, where $S^1$ is the space direction and $\RR$ the Euclidean time direction, and let $\CH$ denote the Hilbert space of states on the circle $S^1$ at fixed time. By the usual state-operator correspondence, $\CH$ can also be identified with the vector space of local operators in the CFT. The are two fundamental ways of inserting a topological defect $\CL$ in such a worldsheet. One possibility is to wrap $\CL$ along the space circle $S^1$ at some fixed time. This associates with every defect $\CL$  a linear operator $\hat\CL:\CH\to \CH$ that commutes with the Virasoro algebra (or, more generally, with the preserved chiral and antichiral algebra). A second possibility is to insert an infinite line defect along the time direction $\RR$ at some fixed space coordinate. This defines a new Hilbert space $\CH_\CL$ (the $\CL$-twisted sector) of $\CL$-twisted states. By the state-operator correspondence, $\CH_\CL$ can be identified with the space of point-like operators starting a defect $\CL$.   The space $\CH_\CL$ is an ordinary module of the preserved chiral and anti-chiral algebras. More generally, the OPE of a local operator in $\CH$ and a $\CL$-twisted operator in $\CH_\CL$ must be again in $\CH_\CL$; thus, $\CH_\CL$ must be a module for the algebra of local operators with respect to the OPE.

The two pieces of data associated with a defect $\CL$, namely the linear operator $\hat\CL$ and the twisted sector $\CH_\CL$, are related by the modular S-transformation on the torus: by defining the $\CL$-twining and the $\CL$-twisted partition functions by
\be Z^\CL(\tau):=\Tr_{\CH}(\hat\CL q^{L_0-\frac{c}{24}}{\bar q}^{\bar L_0-\frac{\bar c}{24}})\ ,\qquad Z_\CL(\tau):=\Tr_{\CH_\CL}(q^{L_0-\frac{c}{24}}{\bar q}^{\bar L_0-\frac{\bar c}{24}})\ ,
\ee one has (for bosonic theories)
\be Z_\CL(\tau)=Z^\CL(-1/\tau)\ .
\ee
This observation puts strong Cardy-like constraints on the possible topological defect lines, in particular if the preserved subalgebra is rational.

Every CFT contains at least one defect, the identity defect $\CI$, that has no effect when it is inserted in a correlation function. The associated linear operator $\hat\CI$ is the identity on $\CH$, and $\CH_\CI\cong \CH$. More generally, if the CFT has a group $G$ of global symmetries, then with each $g\in G$ is associated a topological line defect $\CL_g$. The linear operator $\hat\CL_g$ is just the unitary operator implementing the $g$-transformation on $\CH$, and $\CH_{\CL_g}\equiv \CH_g$ is the $g$-twisted sector. 

There are number of basic operations that are defined on the set of topological defects of a given CFT. First, there is a duality involution $\CL\to \CL^*$ that corresponds to reversing the orientation of the support line; the $\CL^*$-twisted sector $\CH_{\CL^*}$ can be identified with the space of point operators where a line defect $\CL$ terminates. Then, by taking the limit where two parallel defect lines $\CL_1$ and $\CL_2$ are moved very closed to each other, one obtains the fusion $\CL_1\CL_2$. This is an associative, but not necessarily commutative operation, with the defect $\CI$ being the identity. The associated linear operator is just the product $\widehat{\CL_1\CL_2}=\hat\CL_1\hat\CL_2$, while the $\CL_1\CL_2$-twisted sector is a suitable fusion product of modules $\CH_{\CL_1\CL_2}\cong \CH_{\CL_1}\boxtimes\CH_{\CL_2}$. Finally, there is a superposition $\CL_1+\CL_2$ of defects, with associated linear operator $\hat\CL_1+\hat\CL_2$, and with twisted sector being the usual direct sum $\CH_{\CL_1+\CL_2}\cong \CH_{\CL_1}\oplus \CH_{\CL_2}$.

Formally, topological defects in a given CFT can be described as objects in a fusion (or, more generally, tensor) category. The set of morphisms $\Hom(\CL_1,\CL_2)$ between two defects $\CL_1$ and $\CL_2$ is the finite dimensional $\CC$-vector space of topological $2$-way junction point operators, attached to an incoming $\CL_1$ and an outgoing $\CL_2$ defects. More generally, the vector space of $k$-way topological junctions with $r\le k$ incoming defects $\CL_1,\ldots,\CL_r$ and $k-r$ outgoing defects $\CL_{r+1},\ldots,\CL_{k}$ is the vector space $\Hom(\CL_1\cdots\CL_r,\CL_{r+1}\cdots\CL_k)$. A defect $\CL$ is called \emph{simple} if $\Hom(\CL,\CL)\cong \CC$, i.e. if the only topological operators from $\CL$ to itself are proportional to the identity. We always assume that the tensor categories of topological defects in our CFTs are semisimple, i.e. the identity is simple, and every defect decomposes into a finite superposition of simple defects.

The fusion product of two simple defects $\CL_1$ and $\CL_2$ is not necessarily simple, but by semi-simplicity it decomposes as
\be \CL_i\CL_j=\sum_{\text{simple }k} N_{ij}^k\CL_k\ ,
\ee where the fusion coefficients $N_{ij}^k\in \ZZ_{\ge 0}$  equal the dimensions of $3$-junction topological operators
\be N_{ij}^k=\dim_\CC \Hom (\CL_i\CL_j,\CL_k)\ .
\ee If $\CL$ is simple, then also its dual $\CL^*$ is simple, and using $\Hom(\CL,\CL)\cong \Hom(\CL\CL^*,\CI)$, one has 
\be N_{\CL\CL^*}^\CI=1\qquad \forall \text{ simple }\CL\ .
\ee A defect $\CL$ is called \emph{invertible} if it is simple and $\CL\CL^*=\CI$. The set of invertible defects forms a group with respect to fusion, and can be identified with the group of standard global symmetries of the CFT.

The vacuum state of the theory is a simultaneous eigenstate of all linear operators $\hat\CL$ associated with the topological defects $\CL$. The corresponding eigenvalue is called the \emph{quantum dimension} of the defect $\CL$ and denoted by $\langle \CL\rangle$. In a unitary compact CFT with a unique vacuum, $\langle \CL\rangle\ge 1$ for all defects $\CL$, with equality holding if and only if $\CL$ is invertible.

 


 \printbibliography

\end{document}